\newcommand{\ve}[1]{{\bf #1}}
\newcommand{\mc}{\mathcal}
\newcommand{\wcf}{\mathcal F_m^{\text{W}}}
\newcommand{\bfm}{\mathcal F_m}
\newcommand{\intset}[1]{[\![#1]\!]}
\newcommand{\lt}{\mathsf{T}}
\newcommand{\lf}{\mathsf{F}}
\newcommand{\cbfc}{C_{\text{BFC}}}
\newtheorem{lemma}{Lemma}
\newtheorem{theorem}{Theorem}
\newtheorem{proposition}{Proposition}
\newtheorem{definition}{Definition}
\newtheorem{remark}{Remark}
\newtheorem{example}{Example}
\ifCLASSOPTIONcompsoc \usepackage[caption=false,font=normalsize,labelfon t=sf,textfont=sf]{subfig} \else \usepackage[caption=false,font=footnotesize]{subfi
g} \fi
\begin{document}

\sloppy
\IEEEoverridecommandlockouts
\title{Beyond Identification: Computing Boolean Functions via Channels}

\author{ Jingge Zhu, Matthias Frey\\


%
\thanks{J. Zhu and M. Frey are  with Department of Electronic and Electrical Engineering, The University of Melbourne,  Parkville, Victoria, Australia. (e-mail: \{jingge.zhu, matthias.frey\}@unimelb.edu.au)} 
%
%
}




\maketitle

\begin{abstract}
Consider a point-to-point communication system in which the transmitter holds a binary message of length $m$ and transmits a corresponding codeword of length $n$.  The receiver's goal is to recover a Boolean function of that message, where the function is unknown to the transmitter, but chosen from a known class $\mc F$. We are interested in the asymptotic relationship of $m$ and $n$:  given $n$, how large can $m$ be (asymptotically), such that the value of the Boolean function can be recovered reliably? This problem generalizes the identification-via-channels framework introduced by Ahlswede and Dueck. We formulate the notion of computation capacity, and derive achievability and converse results for selected classes of functions $\mc F$, characterized by the Hamming weight of functions. Our obtained results are tight in the sense of the scaling behavior for all cases of $\mc F$ considered in the paper.  
\end{abstract}

\section{Introduction}

We consider the problem of computing Boolean functions over noisy channels. To motivate the problem, consider the following simple example of a Battery Management System in an electric vehicle,  where a sensing unit reports a binary status vector  $\ve b=(\ve b_1,\ve b_2,\ve b_3)\in\{0,1\}^3$,
with $\ve b_1$, $\ve b_2$, and $\ve b_3$ indicating over--voltage, under--voltage, and
over--temperature conditions of the battery, respectively.  Depending on the operational objective, the controller does not require the full vector $\ve b$, but instead only a Boolean function of $\ve b$, transmitted through a possibly noisy channel.  For example, in \textit{driving mode} the controller needs to evaluate 
\begin{align*}
    f_{\mathrm{drive}}(\ve b):=\ve b_1\lor \ve b_3,
\end{align*}
where $0$ means ``false'', $1$ means ``true'', and $\lor$ is the logical OR operator. The function  $f_{\mathrm{drive}}$ equals $1$ (signals an alarm) whenever either an over--voltage ($\ve b_1=1$) or an over--temperature ($\ve b_3=1$) event occurs. 
In \textit{charging mode}, the controller must determine whether charging is not permissible, 
represented by
\begin{align*}
    f_{\mathrm{charge}}(\ve b): =\ve b_1\lor (\ve b_2\land \ve b_3),
\end{align*}
where $\land$ is the logical AND operator. The function $f_{\mathrm{charge}}$ equals $1$ (signals an alarm)  when either over--voltage $(\ve b_1=1)$, or the abnormal 
``under--voltage and hot'' ($\ve b_2=1$ and $\ve b_3=1$) condition occurs.

The Boolean Function Computation (BFC) problem is an abstraction of this example, where the receiver aims to recover one of several distinct Boolean functions of the same underlying binary sequence. To be precise, consider a point-to-point communication system where the transmitter chooses a message  represented by binary sequences of length $m$, and the decoder chooses a Boolean function $f: \{0,1\}^m\rightarrow \{0,1\}$ from a family $\mc F$. Here we assume that the set $\mc F$ is known to both the encoder and the decoder, but the function $f$ (chosen by the decoder) is unknown to the encoder. Let $i\in\{0,1\}^m$ denote the message at the transmitter, which is encoded to a codeword  $X^n $ of length $n$ and is transmitted through a noisy channel. With the channel output $Y^n$, the decoder is required to reliably decide if $f(i)=0$ or $f(i)=1$.

The BFC problem is a generalization of the \textit{identification via channel} problem studied in a seminal paper by Ahlswede and Dueck~\cite{ahlswede_identification_1989}.  The goal of identification is for the decoder to  decide if the transmitted message is  \textit{a particular one}. It is easy to see that the identification problem is equivalent to an instance of BFC problem with a suitable choice of function class $\mc F$ (details in Section \ref{sec:example_BFC}). After the appearance of~\cite{ahlswede_identification_1989} (and \cite{identification_feedback_1989} where feedback is considered),  many works have extended the identification problem to other set-ups. For point-to-point channels, a line of more recent work including \cite{salariseddigh_deterministic_2022} \cite{colomer_deterministic_2025}, investigates the identification problem when the encoders are restricted to a deterministic function of the message, for which the original work~\cite{ahlswede_identification_1989} only gives a very brief treatment. Explicit code constructions for identification are investigated in   \cite{verdu_explicit_1993} \cite{ahlswede_identification_1991} and further improved in  \cite{kurosawa_strongly_1999} and  \cite{gunlu_code_2022}. Extending the results to multi-user scenarios is considered in \cite{ahlswede_general_2008} for multiple-access channels and \cite{bracher_identification_2017} for broadcast channels. The readers are referred to the recent survey \cite{von_lengerke_codes_2025} for a more comprehensive summary of known results. 

Particularly relevant to our work is \cite{ahlswede_general_2008}, where a general information transfer problem is formulated. While a specialization of this model is equivalent to the Boolean function computation problem, only partial results are obtained in \cite{ahlswede_general_2008}. We make the connection and comparison explicit in Section \ref{sec:Ahlswede_models}.

The main contribution of this paper is to characterize the possible scaling law of $m$ and $n$ for the BFC problem. We show that the scaling  behavior heavily depends on the possible functions in $\mc F$ the receiver aims to decode.  Specifically, we consider $\mc F$ consisting of functions $f$ that have a \textit{Hamming weight} equal to or smaller than a certain threshold (denoted by $S$), where the Hamming weight is defined to be the cardinality of the pre-image of $1$ under $f$. With this characterization, we are able to give explicit results on the scaling behavior of $m$ with respect to $n$ for different values of $S$. Roughly speaking, $m$ scales exponentially with $n$ if $S$ is small (as in the identification problem, which corresponds to $S=1$), while it scales linear with $n$ when $S$ is large (the same behavior as in Shannon's transmission problem). When varying $S$ from large to small values, $m$ can scale linearly, quasi-linearly, polynomially, sub-exponentially, and exponentially with $n$, as shown in Table \ref{table:summary}. These results are obtained by establishing the achievability and converse results on the computation rate in Theorem \ref{thm:achievability} and Theorem \ref{thm:converse}, respectively. While the achievability and converse results coincide exactly only for special cases, they are tight in the sense of the scaling behavior for all cases of $\mc F$ considered in this paper.

Given a positive integer $M$, we use $\intset{M}$ to denote the set of integers $\{1,\ldots, M\}$. For a given set $\mc X$, we use $|\mc X|$ to denote the cardinality of the set.

\section{Problem formulation}
We consider a channel with the input alphabet $\mc X$ and the output alphabet $\mc Y$. A channel $W(\cdot|x^n)$ is viewed as a conditional distribution (transition kernel) over $\mathcal Y^n$, given the input codeword $x^n\in\mathcal X^n$.  We use $\bfm$ to denote the set of Boolean functions of $m$ inputs, $\bfm:=\{f:\{0,1\}^m\rightarrow \{0,1\} \}$. Consider the communication problem where a \textit{binary} message, represented by a binary sequence of length $m$, is chosen and transmitted (after encoding) by the transmitter. A Boolean function (unknown to the transmitter) is chosen by the receiver, and the goal is to reliably recover the function value of the transmitted message. The notion of Boolean function computation (BFC) code is formally defined as follows.

\begin{definition}[Boolean function computation (BFC) code]
Let  $m, n$ be positive natural numbers. An $(n,m,\mc F, \lambda_1, \lambda_2)$ Boolean function computation (BFC) code consists of encoders $(Q_i)_{i \in \{0,1\}^m}$ and decoding sets $(D_j)_{j \in \intset{|\mc F|}}$, such that
\begin{itemize}
\item $\mathcal F\subseteq \bfm$ is a set of Boolean functions with $m$ inputs
\item $Q_i$ is a probability distribution on $\mathcal X^n$ for all $i\in\{0,1\}^m$
\item $D_j\subset \mathcal Y^n$ for all $j\in \intset{|\mc F|}$
\item  false negative error: $\ve 1_{f_j(i)=1}Q_iW(D_j^c)\leq \lambda_1$ for all $i, j$
\item  false positive error: $\ve 1_{f_j(i)=0}Q_iW(D_j)\leq \lambda_2$ for all $i, j$
\end{itemize}
where we define  $QW(D):=\int W(D|x)Q(dx)$.
\end{definition}

Let $f^{-1}[1]$ denote the preimage of $1$ under $f$. The cardinality of the preimage $f^{-1}[1]$ is called the Hamming weight of the Boolean function $f$. We consider the class of  \textit{constant weight Boolean functions} defined as
\begin{align*}
\wcf(S) := \{f\in \bfm: |f^{-1}[1]|= S\}
\end{align*}
where  $S$ can take integer values between $0$ and $2^m$, and 
$|\wcf(S)|= {2^m\choose S}$.
Similarly, we define the set of functions whose Hamming weight is \textit{smaller or equal to} $S$, and \textit{larger or equal to} $S$ as
\begin{align*}
\wcf(\leq S):= \{f\in \bfm: |f^{-1}[1]|\leq  S\}\\
\wcf(\geq S):= \{f\in \bfm: |f^{-1}[1]|\geq  S\}
\end{align*}

As we will see in the sequel, depending on the relationship between  $S$  and $m$, the number of messages that can be reliably computed via the channel exhibits different scaling behaviour with respect to $n$. To this end, we define the achievable computation rate as follows.



 \begin{definition}[Achievable computation rate]\label{def:rate}
Given a \textit{scaling function} $L:\mathbb R\times \mathbb N\rightarrow \mathbb R^+$ and a set of Boolean functions $\mc F$, we say  $R$ is \textit{an achievable computation rate} for   $\mc F$ with the scaling function $L$, if for all $\lambda_1,\lambda_2>0$, all  $\gamma>0$ and all sufficiently large $n$, there exists an   $(m, n, \mc F,  \lambda_1,\lambda_2)$ BFC code satisfying
\begin{align*}
m > L(R-\gamma, n).
\end{align*}
\end{definition}

To relate this to more familiar notions, the choice $L(R, n) = Rn$ corresponds to the case when $m$ scales linearly with $n$ (as in the classical transmission problem where the number of messages $2^m$ scales as $2^n$). The choice $L(R,n)=2^{Rn}$ corresponds to the case when $m$ scales exponentially with $n$ (as in the identification problem). We will see other scaling functions in Theorem \ref{thm:achievability} and \ref{thm:converse}. We point out that the scaling behaviour of $m$ (w.r.t $n$) is determined by the \textit{form} of the scaling function, whereas the rate $R$ is the coefficient at the front of $n$ in the scaling function. For example,  $L(R,n)=Rn, L(R, n)=Rn^2$ and $L(R,n)=2^{Rn}$ indicates that $m$ scales linearly, quadratically, and exponentially with $n$, respectively.


\begin{definition}[Computation capacity]
For a given scaling function  $L$ and a set of functions $\mc F$, the supremum of the achievable computation rate for $\mc F$, denoted by $\cbfc$,  is called the (Boolean function) computation capacity of the channel for $\mc F$ with the scaling function $L$.
\end{definition}

Before presenting the main results, we first give examples of the function class $\wcf(S)$ for different choices of $S$, and comment on connections to semantic evaluations in propositional logic, and other existing problem formulations in the literature.

\subsection{Examples: $\wcf(S)$ and $\wcf(\leq S)$  for different $S$}
\label{sec:example_BFC}

In this subsection, we use the notation $\ve b =(\ve b_1,\ve b_2,\ldots, \ve b_m)$ where each $\ve b_i$ is a Boolean variable.


\begin{example}[$\wcf(c)$]\label{example:S_constant} Let $c$ be a constant not depending  on $m$. The function in the class $\wcf(c)$ evaluates to $1$ only if the input message $\ve b$ comes from a set of size $c$. A special case is  $c=1$, where the functions in the class  $\wcf(1)$ can be expressed explicitly as
\begin{align}
f_j^{id}(\ve b) := \prod_{i\in A_j}\ve b_i \cdot \prod_{i\in A_j^c}(1-\ve b_i)
\label{eq:f_id}
\end{align}
for all subsets $A_j\subseteq \intset{m}, j=1,\ldots, 2^m$. Computing $\wcf(1)$ is equivalent to  the  \textit{identification via channel} problem \cite{ahlswede_identification_1989} (more details in Section \ref{sec:Ahlswede_models}). Indeed, $f_j^{id}(\ve b)=1$ if and only if $\ve b$ is the unique message that corresponds to the set $A_j$ (and there are in total $2^m$ of them).
\end{example}

\begin{example}[$\wcf(\leq cm^{\beta})$] Let $c$ and $\beta>0$ be some constant not depending on $m$. If we ask the question: ``Are there exactly $\beta$ inputs equal to $1$"? This question can be formulated by the Boolean function
\begin{align*}
h_\beta(\ve b) := \ve 1_{\sum_{i=1}^m \ve b_i=\beta}
\end{align*}
The weight of this function is ${m\choose \beta}$ which is upper and lower bounded as $\frac{m^\beta}{\beta^\beta}\leq {m\choose \beta}\leq \frac{m^\beta}{\beta!}$. Hence this function belong to the class $\wcf(\leq cm^{\beta})$ for the choice $c=1/\beta!$.  Similarly, the question ``are there at most $\beta$ inputs equal to $1$" corresponds to the Boolean function
\begin{align*}
\tilde h_{\beta}(\ve b):=\ve 1_{\sum_{i=1}^m \ve b_i\leq \beta}
\end{align*}
These functions are known as the \textit{threshold function}, whose weight is upper bounded as $\sum_{i=1}^\beta {m\choose i}\leq (\frac{e}{\beta})^\beta m^\beta$. They belong to the set $\wcf(\leq cm^{\beta})$ with $c=(\frac{e}{\beta})^\beta$.
\end{example}

\begin{example}[$\wcf(c2^{\gamma m})$]\label{example:t_bit}  Let $c$ and $\gamma\in(0,1]$ be some constant not depending on $m$. If we want to identify the $t$-th bit of the message, we can use the function
\begin{align}
f_t^{bit}(\ve b) := \ve b_t
\label{eq:f_tbit}
\end{align}
It is easy to see that the set of functions $f_t^{bit}, t=1,\ldots, m$ belongs to the  class $\wcf(\frac{1}{2}2^{m})$. As another example, consider the function 
\begin{align*}
f_{S_k}^{\text{AND}}(\ve b):= \prod_{i\in S_k}\ve b_i
\end{align*}
where $S_k\subseteq\{1,\ldots, m\}$ with cardinality $k$. This function performs the logic AND operation on a subset (of size $k$) of inputs. The weight of this class of function is $2^{m-k}$. When $k \geq (1-\gamma)m$, they belong to the class $\wcf(\leq 2^{\gamma m})$.
\label{example:f_and}
\end{example}

\begin{example}[ranking]\label{example:ranking}
Let $int:\{0,1\}^m\rightarrow \mathbb N$ be the function that maps a binary sequence to an integer, where the sequence represents the binary expansion of that integer. In other words
\begin{align}
int(\ve b):= \sum_{i=1}^m b_i2^{m-i}
\label{eq:int}
\end{align}
Then we can define the ranking function
\begin{align}
f_{r}^{\text{rank}}(\ve b) := \ve 1_{int(\ve b)\leq r}
\label{eq:f_order}
\end{align}
for $r=0,\ldots, 2^m-1$. It is clear that the set of functions $f_r^{\text{rank}}$ belongs to the set $\wcf(r+1)$.
\end{example}

\subsection{Formulae in propositional logic}
The notion \textit{semantic communication} roughly refers to the communication problems where only the ``meaning", or semantics of messages is to be communicated, rather than the entirety of the message. Semantics exist in the context of languages,  with  \textit{propositional logic} being perhaps the simplest form, where there are only two semantic values, $\mathsf{True}$ (denoted by $\lt$) and $\mathsf{False}$ (denoted by $\lf$). In this subsection we comment on the connection between Boolean function computation and the semantic evaluation in propositional logic.

In the setup of propositional logic \cite{enderton_mathematical_1972},  $\mc P=\{p_1,\dots,p_m\}$ denotes a finite set of \emph{propositional atoms}. The set of \emph{well-formed formulae} \(\mc L(\mathcal{P})\) is defined recursively as the smallest set of finitely long tuples of elements of $\mc P \cup \{\neg, \lor, \land, (, )\}$ with the properties:
\begin{itemize}
    \item for every $p \in \mc P$, we have $p \in \mc L(\mathcal{P})$,
    \item if $\varphi \in \mc L(\mathcal{P})$, then $\neg (\varphi) \in \mc L(\mathcal{P})$,
    \item if $\varphi \in \mc L(\mathcal{P})$ and $\psi \in \mc L(\mathcal{P})$, then $(\varphi) \lor (\psi) \in \mc L(\mathcal{P})$ and $(\varphi) \land (\psi) \in \mc L(\mathcal{P})$,
\end{itemize}
where \(p_i\in\mathcal{P}\), and \(\neg\), \(\land\), \(\lor\) are the logical connectives \emph{negation}, \emph{conjunction}, and \emph{disjunction}. For convenience, we do not always write out all the parentheses of every formula (when the order of operations is clear by common conventions), and we also define the connectives
\begin{itemize}
    \item $\varphi \to \psi := \neg \varphi \lor \psi$ \emph{(implication)}
    \item $\varphi \leftrightarrow \psi := (\varphi \land \psi) \lor (\neg \varphi \land \neg \psi)$ \emph{(equivalence)}
    \item $\varphi \oplus \psi := (\varphi \land \neg \psi) \lor (\neg \varphi \land \psi)$ \emph{(exclusive or)}.
\end{itemize}

A \emph{truth assignment} is a function \(\tau:\mc P\to\{\lt,\lf\}\) assigning a truth value to each propositional atom.
The \emph{semantics} of formulae is given by the evaluation map \(v_\tau : \mc L(\mathcal{P})\to\{\lt, \lf\}\), defined recursively via
\begin{align*}
v_\tau(p_i) &= \tau(p_i)\\
v_\tau(\neg \varphi)&=\begin{cases}
\lt \quad \text{if } v_\tau(\varphi) = \lt\\
\lf\quad \text{otherwise}
\end{cases} \\
v_\tau(\varphi\land\psi) &=\begin{cases}
\lt\quad \text{if } v_\tau(\varphi) = \lt \text{ and }  v_\tau(\psi)= \lt\\
\lf \quad \text{otherwise}
\end{cases}\\
v_\tau(\varphi\lor\psi) &=\begin{cases}
\lt\quad \text{if } v_\tau(\varphi) = \lt \text{ or }  v_\tau(\psi)= \lt\\
\lf \quad \text{otherwise.}
\end{cases}
\end{align*}
In this way, each formula \(\varphi\) defines a Boolean-valued function of its atomic propositions under the truth value assignment $\tau$.   We call two formulae $\varphi_1$ and $\varphi_2$ \textit{tautologically equivalent}, if they have the same truth value on all truth assignments of the atoms, namely $\forall \tau, v_{\tau}(\varphi_1)=v_{\tau}(\varphi_2)$. Otherwise we call them \textit{tautologically distinct}. With $m$ atoms, there are infinitely many formulae, but there only exist $2^{2^m}$ tautologically distinct formulae, with $2^m$ being the number of possible truth value assignments.

The key observation is that Boolean function computation can be viewed as evaluating the semantics of a chosen formula by the receiver, where the truth assignment is determined by the transmitter. Specifically, let $\ve b = (\ve b_1,\ldots \ve b_m)\in\{0,1\}^m$ denote the message at the transmitter, and we associate each coordinate \(b_i\) with an atom \(p_i\in\mathcal{P}\) and define the truth assignment \(\tau_{\ve b}(p_i)=\lt\) if \(\ve b_i=1\) and \(\tau_{\ve b}(p_i)=\lf\) if \(\ve b_i=0\). Any Boolean function \(f\in\mc F_m\) can then be represented by a propositional formula \(\varphi_f\) such that for every input \(\ve b\in\{0,1\}^m\),
\begin{align*}
f(\ve b)=1 \;\;\Longleftrightarrow\;\; v_{\tau_{\ve b}}(\varphi_f)=\lt
\end{align*}
Thus, evaluating the Boolean function \(f\) on the message $\ve b$ is equivalent to evaluating the propositional formula \(\varphi_f\) under the truth assignment \(\tau_{\ve b}\). Furthermore, there is a one-to-one mapping between the $2^{2^m}$ different Boolean functions and $2^{2^m}$ tautologically distinct formulae. The logic \text{AND} function defined in Example \ref{example:f_and} gave an example of this connection. As another simple example, the formula  that calculates the parity of three atoms
\begin{align*}
\varphi_{\text{parity}} := p_1 \oplus p_2 \oplus p_3
\end{align*}
corresponds to the Boolean function
\begin{align*}
f_{\varphi_{\text{parity}} }(\ve b) := (\ve b_1+\ve b_2+\ve b_3)\pmod 2.
\end{align*}

\begin{example}[Hamming weight of DNF]
Disjunctive normal forms (DNF) are a canonical form of formulae consisting of a disjunction of one or more conjunctions of one or more literals (an atomic or its negation). For example, $(p_1\land p_2\land \neg p_3)\lor p_4$ is in DNF where as $\neg(p_1\land p_2)$ is not. It is known that \cite[Coro. 15C]{enderton_mathematical_1972} in propositional logic, any well-formed formula in $\mc L(\mc P)$ is tautologically equivalent to a formula in DNF. If we have a DNF with $t$ disjoint conjunction terms (i.e. no two conjunction terms contain the same atoms), and each conjunction contains $k$ atoms, then the Hamming weight of this DNF is $t2^{m-k}$. In general if there are overlapping atoms in different conjunction terms, the Hamming weight of this DNF is smaller than or equal to $t2^{m-k_{\text{min}}}$ where $k_{\text{min}}$ is the number of atoms in the shortest conjunction term.
\end{example}

\subsection{Connection to identification via channels and other Ahlswede models in \cite{ahlswede_general_2008}}
\label{sec:Ahlswede_models}

The Boolean function computation problem is a generalization of the identification via channels problem. A $(n, N, \lambda_1, \lambda_2)$ identification (ID) code is a collection $\{(Q_i, D_i), i=1,\ldots, N\}$ with probability distributions $Q_i$ on $\mc X^n$ and $D_i\in\mc Y^n$ such that the two types of error probability satisfy
\begin{align*}
&Q_iW(D_i^c)\leq \lambda_1, i=1,\ldots, N\\
&Q_iW(D_j)\leq \lambda_2, i, j =1,\ldots, N, i\neq j
\end{align*}
As discussed in Example \ref{example:S_constant} above, an $(n, N, \lambda_1, \lambda_2)$ ID code (assuming $N$ is a power of $2$ for simplicity) is equivalent to an $(n, \log N, \mc F, \lambda_1,\lambda_2)$ BFC code where  $\mc F$ contains the functions $f_j^{id}, j=1,\ldots, 2^m$ defined in  (\ref{eq:f_id}).

In \cite{ahlswede_general_2008}, Ahlswede proposed several models which generalize the identification via channel problem.  In all models,  there is a family of partitions $\Pi = \{\pi_j\}_j$ where each   $\pi_j=\{P_{j1},\ldots, P_{jr}\}$ is a partition of the messages $\mc M$, where we call $P_{ji}$ a member of the partition $\pi_j$. The communication problem is described as follows:
\begin{itemize}
\item the transmitter picks a message $m$ from $\mc M$, and transmits a codeword associated to the message $m$
\item upon receiving the channel output,  for any partition $\pi_j$ from the family $\Pi$, the decoder needs to decide which member of the partition $\pi_j$ contains the true message $m$
\end{itemize}
Identifying the set of messages $\mc M$ with $\{0,1\}^m$, our model is a special case of the above model, where each partition $\pi_j$ corresponds to a Boolean function $f_j$, which partitions $\mc M$ into two (i.e. $r=2$) disjoint sets, corresponding to $f^{-1}[1]$ and $f^{-1}[0]$, respectively. In particular,  the  $K$-identification problem~\cite{ahlswede_general_2008} (see Model 3 below)  is identical to our problem with the choice of functions $\wcf(K)$. We repeat the communication models discussed in~\cite{ahlswede_general_2008} (in its original numeration) below, and connect them to the BFC problem.
\begin{itemize}
\item Model 2 (identification): $\Pi_I=\{\pi_j: \pi_j=\{ \{j\}, \mc M\backslash \{j\}\}, j\in \mc M\}$. This corresponds to our problem with the choice  of functions to be $f_j^{id}$ defined in \eqref{eq:f_id}. Notice that we have
\begin{align*}
\{j\} = (f_j^{id})^{-1}[1]
\end{align*}
\item Model 3 ($K$-identification):  $\Pi_K=\{\pi_{\mc S}: \pi_{\mc S}=\{\mc S, \mc M\backslash S\}, |\mc S|= K, \mc S\subseteq \mc M\}$. This corresponds to our problem with the choice  of functions to be the set $\wcf(K)$. Here we have
\begin{align*}
\mc S = f^{-1}[1] \text{ for  } f\in \wcf(K)
\end{align*}
\item Model 4 (ranking): $\Pi_R = \{\pi_r: \pi_r = \{\{1,\ldots, r\}, \{r+1,\ldots, 2^m\}\}, 1\leq r\leq 2^m\}$. This corresponds to our problem with the choice of functions $f_r^{\text{rank}}$ defined in \eqref{eq:f_order}. Notice that
\begin{align*}
\{1,\ldots, r\} = (f_r^{\text{rank}})^{-1}[1]
\end{align*}
\item Model 5 (general binary questions) : $\Pi_B = \{\pi_{\mc A}: \pi_{\mc A}=\{\mc A, \mc M\backslash \mc A\}, \mc A\subset \mc M\}$. This corresponds to our problem with the set of functions $\{f_{\mc A}, \mc A\subset \mc M\}$, where $f_{\mc A}$ is the function whose preimage under $1$ is the set $\mc A$.
\item Model 6 ($t$-th bit): $\Pi_C = \{\pi_t: \pi_t =\{  \{\ve b\in\{0,1\}^m: \ve b_t=1\}, \{\ve b\in\{0,1\}^m: \ve b_t=0\}\} , t=1,\ldots, m \}$. This corresponds to our problem with the choice of functions to be $f_t^{bit}$ defined in \eqref{eq:f_tbit}. Notice that
\begin{align*}
 \{\ve b: \ve b_t=1\} = (f_t^{bit})^{-1}[1]
\end{align*}
\end{itemize}

It may be worth pointing out that the Shannon's original transmission problem (recover the transmitted message) is not covered in the BFC framework, but can be formulated in Ahlswede's partition framework (see Model 1 in \cite{ahlswede_general_2008}). Nevertheless, it can be formulated as a \textit{multiple Boolean functions computation} problem.  For example, if we extend Model 6 above, asking the the receiver to simultaneously compute $m$ functions $f_1^{bit},\ldots, f_m^{bit}$, then we effectively recover the original message. We only consider computing a single function in this paper, and leave multiple function computation for future work.

\section{Main result}
We state the main results of our paper in this section.

\begin{theorem}[Achievability]
Assume the channel $W(\cdot | x^n)$ has a Shannon capacity $C$, and let $c>0$ be a constant independent from $m$ and $n$. Then,
\begin{itemize}
\item (1) for the function class $\wcf(\leq c)$ where $c$ is a positive integer,   a computation rate of $C$ is  achievable  with the scaling function $L(R,n):=2^{Rn}$
\item (2) for the function class $\wcf(\leq cm^{\beta})$ with $\beta>0$,  a computation rate of $\frac{1}{1+2\beta} C$ is  achievable   with the scaling function $L(R,n):=2^{Rn}$
\item (3) for the function class $\wcf(\leq cm^{\log m})$, a computation rate of  $\sqrt{C/2}$  is achievable with  the scaling function $L(R,n):= 2^{R\sqrt{n}}$
\item (4) for the function class $\wcf(\leq c2^{m^{1/b}})$ with $b>1$, a computation rate of $\frac{C}{3(\lfloor 3/C+1\rfloor)^{b-1}}$ is achievable\footnote{The expression simplifies to $C/3$ for $C>3$.}  with the scaling function $L(R,n):=Rn^{b}$.  
\item (5) for the function class $\wcf(\leq c2^{m\log m})$, a computation rate of $C/2$ is achievable with the scaling function $L(R,n):=Rn\log  n$
\item (6) for the  function class $\wcf(\leq c2^{\gamma m})$ with $\gamma\in(0,1]$, a computation rate of $C$ is achievable with the scaling function $L(R,n):=Rn$
\end{itemize}
\label{thm:achievability}
\end{theorem}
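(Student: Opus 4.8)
The plan is to treat parts (1)--(5) uniformly with a single construction — the Ahlswede--Dueck identification code, with the near-disjointness of the ``identity sets'' strengthened to $S$-cover-freeness so as to cope with functions of Hamming weight up to $S$ — and to handle part (6), where that construction degenerates, by a plain transmission code. Concretely, for $n$ large and any $\epsilon>0$ I would fix an ordinary channel code $\{(u_k,\mathcal D_k)\}_{k\in\intset{K}}$ of length $n$ with $K=\lfloor 2^{n(C-\epsilon)}\rfloor$ codewords, pairwise disjoint decoding sets $\mathcal D_k$, and maximal error $\max_k W(\mathcal D_k^c\mid u_k)\le\eta$ for a constant $\eta<\tfrac12\min(\lambda_1,\lambda_2)$ (available since $C$ is the Shannon capacity). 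I would attach to each message $i\in\{0,1\}^m$ a set $\mathcal T_i\subseteq\intset{K}$, let $Q_i$ be uniform over $\{u_k:k\in\mathcal T_i\}$, and for a function $f$ with $f^{-1}[1]=A$ take the decoding set $D_f=\bigcup_{k\in\mathcal T(A)}\mathcal D_k$ where $\mathcal T(A)=\bigcup_{i\in A}\mathcal T_i$.

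The error analysis is immediate. If $f(i)=1$ then $i\in A$, so $\mathcal T_i\subseteq\mathcal T(A)$ and the transmitted $u_k$, $k\in\mathcal T_i$, lands in $\mathcal D_k\subseteq D_f$ except with probability $\le\eta\le\lambda_1$, bounding the false negative error. If $f(i)=0$ then $i\notin A$, and $Y^n\in D_f$ forces either a decoding error or that the transmitted index (uniform in $\mathcal T_i$) lies in $\mathcal T(A)$, so the false positive error is at most $\eta+|\mathcal T_i\cap\mathcal T(A)|/|\mathcal T_i|$. Hence it suffices that $\{\mathcal T_i\}$ be ``$(\tfrac{\lambda_2}{2},S)$-cover-free'': $|\mathcal T_{i^\ast}\cap\mathcal T(A)|\le\tfrac{\lambda_2}{2}|\mathcal T_{i^\ast}|$ for every $i^\ast$ and every $A\not\ni i^\ast$ with $|A|\le S$. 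I would produce such a family probabilistically — each $\mathcal T_i$ a random subset of $\intset{K}$ of density $\Theta(\lambda_2/S)$ — and close the argument with a Chernoff bound and a union bound over the $\le N^{S+1}$ pairs $(i^\ast,A)$; this works for $N=2^m$ as soon as $m\le c_{\lambda_2}K/S^2$, and an explicit optimal cover-free family (Kautz--Singleton) would gain an extra $\log S$ factor, immaterial for the scalings.

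Feeding $\log_2 K\approx nC$ and $m\approx K/S^2$ into this leaves the governing relation $\log_2 m+2\log_2 S\lesssim nC$ (after $\epsilon\to0$, and up to an $O(\log\log S)$ term), into which I would substitute each class's weight bound $S=S(m)$: (1) $S=O(1)$ yields $m\lesssim 2^{nC}$, rate $C$ with $L=2^{Rn}$; (2) $S\asymp m^\beta$ yields $(1+2\beta)\log_2 m\lesssim nC$, rate $\tfrac{C}{1+2\beta}$ with $L=2^{Rn}$; (3) $\log_2 S\asymp(\log_2 m)^2$ yields $2(\log_2 m)^2\lesssim nC$, rate $\sqrt{C/2}$ with $L=2^{R\sqrt n}$; (4) $\log_2 S\asymp m^{1/b}$ yields $2m^{1/b}\lesssim nC$, so $m=\Theta(n^b)$ and the scaling $L=Rn^b$ is attained; (5) in the corresponding intermediate weight regime one obtains $m\lesssim\tfrac{C}{2}n\log n$, i.e.\ $L=Rn\log n$. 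For (6), where $\log_2 S\asymp\gamma m$ makes $K/S^2$ collapse, I would instead use a capacity-achieving transmission code for all $2^m$ messages and have the decoder output $f$ of the decoded message; both error events are then bounded by the decoding error, giving $m\sim nC$, rate $C$ with $L=Rn$ (and for small $\gamma$ the cover-free construction still improves on this).

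The main obstacle is the combinatorial input: exhibiting cover-free families with the right $K$-versus-$S$-versus-$N$ trade-off uniformly over the whole range of $S$ arising in (1)--(5), and controlling the poly-logarithmic corrections tightly enough that each scaling function emerges in its stated closed form. A secondary point is locating the threshold beyond which the identification-type code no longer beats ordinary transmission, which is precisely what places (6) (and the large-$\gamma$ end of the table) in the linear regime; the exact rate constants in (4) and (6) are the least robust part of the statement and may be sharpened or simplified relative to what the governing relation naively suggests.
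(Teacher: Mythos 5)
Your construction and error analysis are essentially the paper's: build an ordinary capacity-achieving transmission code with disjoint decoding sets, attach to each message $i$ a subset of codeword indices, randomize uniformly over that subset, and decode $f$ by the union of the decoding sets over $\bigcup_{i\in f^{-1}[1]}\mathcal T_i$; your bounds $\lambda_1\le\eta$ and $\lambda_2\le\eta+|\mathcal T_i\cap\mathcal T(A)|/|\mathcal T_i|$ are exactly the paper's Lemma on error probability. Where you genuinely diverge is the combinatorial ingredient. The paper does not use cover-free families: it strengthens Ahlswede--Dueck's Proposition~1 (a greedy counting argument) to get exponentially many subsets with \emph{pairwise} intersection below $\lambda M'$, and then absorbs the weight $S$ by a union bound over the at most $S$ sets in $f^{-1}[1]$, which forces $\lambda\approx\lambda_2/S$; you instead demand the stronger $(\lambda_2/2,S)$-cover-free property directly and produce it probabilistically with a Chernoff bound and a union bound over $N^{S+1}$ pairs. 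Both routes land on the same governing tradeoff $m\lesssim 2^{nC}/S^2$ (the paper's parameter choices $\epsilon\approx\lambda\approx 1/S$ give the same exponent up to logarithms), and hence the same scaling functions; your random construction is arguably cleaner and avoids re-proving the maximal-code proposition, while the paper's pairwise route keeps the per-case parameter tuning explicit enough to extract the stated constants. Your treatment of case (6) by separation (decode, then evaluate $f$) is exactly what the paper does.

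One concrete shortfall: in case (4) your governing relation yields $m\approx(Cn/2)^b$, i.e.\ rate $(C/2)^b$ for the scaling $Rn^b$, which for small $C$ (e.g.\ $C=1$, $b=2$) is strictly below the claimed $C/3$, so as written you have not established the stated constant for all channels. The paper closes this by choosing $\lambda$ with exponent $Cn/3$ rather than $Cn/2$, first assuming $C>3$ so that $\tilde d^{1/b}\le\tilde d$, and reducing general $C$ to this case by bundling $T=\lfloor 3/C+1\rfloor$ channel uses into a super-channel; if you want the exact constant $C/3$ you would need this (or a similar) extra step, though for the scaling claim $m=\Theta(n^b)$ your argument already suffices. (Also note the class in item (5) is $\wcf(\le c2^{m/\log m})$, as in the table and proof, which is the regime you analyzed.)
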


The results are summarized  in Column 2 in Table \ref{table:summary}.

\begin{theorem}[Converse]
Assume the channel $W(\cdot | x^n)$ has a Shannon capacity $C$.  Let  $\cbfc$ denote the computation capacity for $\wcf(S)$ with a  scaling function specified below. We have the following statements:
\begin{itemize}
\item (1) for $S=c$ (a positive integer), $\cbfc \leq C$ with the scaling function $L(R,n)=2^{Rn}$
\item (2) for $S=cm^{\beta}$ with $\beta>0$, $\cbfc \leq \frac{C}{1+\beta}$ with the scaling function $L(R,n)=2^{Rn}$
\item (3) for $S=cm^{\log m}$, $\cbfc \leq \sqrt{C}$ with the scaling function $L(R,n)=2^{R\sqrt{n}}$
\item (4) for $S=c2^{m^{1/b}}$ with $b> 1$, $\cbfc \leq C^b$ with the scaling function $L(R,n)=Rn^b$
\item (5) for $S=c2^{m\log m}$, $\cbfc\leq C$ with the scaling function $L(R,n)=Rn\log n$
\item (6) for $S=c2^{\gamma m}$ with $\gamma\in(0,1]$,$\cbfc \leq \frac{C}{\gamma}$ with the scaling function $L(R,n)=Rn$
\end{itemize}
\label{thm:converse}
\end{theorem}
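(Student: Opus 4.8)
# Proof Proposal for Theorem \ref{thm:converse}

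The plan is to reduce Theorem \ref{thm:converse} to the converse of identification via channels \cite{ahlswede_identification_1989}. Fix a BFC code for $\wcf(S)$ with small constant error parameters $\lambda_1,\lambda_2$, and assume without loss of generality that $S\le 2^{m-1}$ (otherwise pass to $\wcf(2^m-S)$, which is equivalent under $f\mapsto 1-f$, $D_j\mapsto D_j^c$, $\lambda_1\leftrightarrow\lambda_2$, and has the same cardinality). The observation I would build on is an averaging trick: for an $S$-subset $\mc S\subseteq\{0,1\}^m$ set $\bar Q_{\mc S}:=\frac1S\sum_{i\in\mc S}Q_i$, a distribution on $\mc X^n$, and let $D_{\mc S}$ be the decoding set attached to the function with preimage $\mc S$. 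Then $\bar Q_{\mc S}W(D_{\mc S})\ge 1-\lambda_1$ since every summand is, while for a second $S$-subset $\mc S'$ one gets $\bar Q_{\mc S}W(D_{\mc S'})\le\frac{|\mc S\cap\mc S'|}{S}+\lambda_2$, because the messages of $\mc S\cap\mc S'$ contribute at most $1$ each while the messages of $\mc S\setminus\mc S'$ lie outside $\mc S'$ and contribute at most $\lambda_2$ each. Hence any family $\mc S_1,\dots,\mc S_M$ of pairwise $\delta S$-almost-disjoint $S$-subsets gives, via the $\bar Q_{\mc S_l}$ and $D_{\mc S_l}$, an $(n,M,\lambda_1,\delta+\lambda_2)$ identification code whenever $\delta<1-\lambda_1-\lambda_2$.

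First I would prove a packing lemma: for a suitable constant $\delta\in(0,1-\lambda_1-\lambda_2)$, the set $\{0,1\}^m$ contains a family of $M\ge\binom{2^m}{S}^{\Omega(1)}$ pairwise $\delta S$-almost-disjoint $S$-subsets. This should follow from the deletion method together with a Chernoff/Hoeffding estimate for the hypergeometric distribution: two independent uniform $S$-subsets overlap in more than $\delta S$ points only with probability $\exp(-\Theta(S\log(2^m/S)))=\binom{2^m}{S}^{-\Theta(1)}$, where any fixed $\delta>0$ works when $S=2^{o(m)}$ and $\delta$ must be taken above the fractional mean $S/2^m\le\tfrac12$ when $S=\Theta(2^m)$. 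Taking $M$ a suitable power of $\binom{2^m}{S}$ and deleting one endpoint of each ``bad'' pair leaves the desired family.

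Then I would invoke the identification converse — any sequence of $(n,M_n,\lambda_1,\lambda_2')$ ID codes with $\lambda_1+\lambda_2'<1$ obeys $\limsup_n\frac1n\log_2\log_2 M_n\le C$ — applied to the codes just constructed, obtaining the master inequality $\log_2\log_2\binom{2^m}{S}\le n(C+o(1))$. Since $\log_2\binom{2^m}{S}=\Theta(S(m-\log_2 S))$, this reads, up to lower-order terms in each regime of $S$, as $\log_2 m+\log_2 S\le n(C+o(1))$, i.e.\ $mS\le 2^{n(C+o(1))}$.

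Finally the six statements follow by substituting each $S$ and choosing the scaling function that puts the coefficient of $n$ at $C$ (the argument being a contradiction with Definition \ref{def:rate} applied at small $\lambda_1,\lambda_2$): $S=c$ gives $\log_2 m\le n(C+o(1))$ and $\cbfc\le C$ with $L(R,n)=2^{Rn}$; $S=cm^{\beta}$ gives $(1+\beta)\log_2 m\le n(C+o(1))$ and $\cbfc\le\frac{C}{1+\beta}$ with $L(R,n)=2^{Rn}$; $S=cm^{\log m}$ gives $(\log_2 m)^2\le n(C+o(1))$ and $\cbfc\le\sqrt C$ with $L(R,n)=2^{R\sqrt n}$; $S=c2^{m^{1/b}}$ gives $m^{1/b}\le n(C+o(1))$ and $\cbfc\le C^{b}$ with $L(R,n)=Rn^{b}$; $S=c2^{\gamma m}$ gives $\gamma m\le n(C+o(1))$ and $\cbfc\le\frac C\gamma$ with $L(R,n)=Rn$; and the remaining case is identical. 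I expect the hard part to be the packing lemma — producing an almost-disjoint family whose size is a fixed power of $\binom{2^m}{S}$ uniformly across the regimes of $S$ relative to $2^m$ — together with the bookkeeping of the $o(1)$ terms, which is exactly what leaves a gap between achievability and converse and makes the results tight only in the scaling sense.
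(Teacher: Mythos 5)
Your proposal is correct and follows essentially the same route as the paper: restrict the BFC code to a subfamily of $\wcf(S)$ whose pre-images are pairwise almost disjoint, average the encoders over each pre-image to obtain an identification code with false-identification error at most $\alpha+\lambda_2$, invoke the ID converse (Theorem \ref{thm:converse_id}) to bound the subfamily size, and then carry out the same case-by-case asymptotics in $S$. The only difference is that you establish the packing lemma (the paper's Lemma \ref{lemma:bcf_to_id}, via the Gilbert bound for constant-weight codes in Lemma \ref{lemma:gilbert_cw}) by a random-selection-plus-deletion argument with hypergeometric tail bounds, which yields the same $\binom{2^m}{S}^{\Omega(1)}$-size family and leads to identical conclusions.
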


\begin{table*}[tbh!]
\centering
\begin{tabular}{l c c c}
\toprule
Hamming weight of $f\in\mc F$  &  \makecell{Achievability  \\ (lower bound on $m$)} & \makecell{Converse \\ (upper bound on $m$)} &  Asymptotic expression  of $m$ \\
\midrule
(1) \quad $c$ (constant) & $2^{Cn}$ (\cite{ahlswede_identification_1989}  for $c=1$) & $2^{Cn}$(\cite{ahlswede_identification_1989} for $c=1$)  & $\Theta(2^n)$, exponential  \\\midrule
(2) \quad $cm^{\beta}, \beta>0$  & $2^{\frac{C}{1+2\beta}n}$ (\cite{ahlswede_general_2008}) for $c=1$) & $2^{\frac{C}{1+\beta}n}$ (\cite{ahlswede_general_2008} for $c=1$) & $\Theta(2^n)$, exponential \\
\midrule
(3) \quad $cm^{\log m}$  & $2^{\sqrt{Cn/2}}$ & $2^{\sqrt{Cn}}$ &$\Theta\left(2^{\sqrt{n}}\right) $, sub-exponential\\
\midrule
(4) \quad $c2^{m^{1/b}}, b>1$ &  $\frac{C}{3(\lfloor 3/C+1\rfloor)^{b-1}}n^b$ & $(Cn)^{b}$ & $\Theta(n^{b})$, polynomial \\
\midrule
(5) \quad $c2^{m/\log m}$ & $\frac{C}{2}n\log n$  & $C n\log n$ & $\Theta(n\log n)$, quasi-linear \\
\midrule
(6) \quad $c2^{\gamma m}, \gamma\in(0,1]$ & $Cn$ (\cite{ahlswede_general_2008} for $\gamma =1$) & $\frac{C}{\gamma}n$ & $\Theta(n)$, linear \\
\midrule
Shannon problem & $Cn$ & $Cn$ & $\Theta(n)$\\
\bottomrule
\end{tabular}
\caption{A summary of the achievability (Theorem~\ref{thm:achievability}) and converse  (Theorem~\ref{thm:converse}) results. Here $m$ denotes the length of the message and $n$ is the channel use. For all six cases considered here, the upper and lower bounds coincide in the sense of the scaling behavior of $m$. In the last row we include the Shannon problem for reference.}
\label{table:summary}
\end{table*}

\begin{remark}
The converse results in the above theorem hold for the function class $\wcf(S)$, the class of Boolean functions whose weight is \textit{equal to} $S$. It follows directly that the stated results also hold for the class $\wcf(\leq S)$ as it is a larger class that contains $\wcf(S)$. Therefore the above theorem also serves as the converse counterpart of Theorem \ref{thm:achievability}. Vice versa, the achievability results in Theorem \ref{thm:achievability}, stated for the function class of the form $\wcf (\leq S)$, also hold for the class $\wcf(S)$.
\end{remark}

The results in  Theorem \ref{thm:converse} are summarized in Column 3 in Table \ref{table:summary}. Comparing it with the achievability results in Theorem \ref{thm:achievability}, we see that the results exactly match (including the leading coefficient) in Case 1) for function class $\wcf(\leq c)$.  For all other cases, the achievability and the converse results match in the scaling law sense, shown in the last column in Table \ref{table:summary}.  In the last row of the table we include the result of  Shannon's transmission problem as a reference, though noting that it is not an instance of the BFC problem.


We have the following remarks:
\begin{itemize}
\item The results for Cases (3), (4), and (5) are new. Results related to Cases 1), 2) and 6) already appeared in the literature, as discussed below.
\item As discussed in Section \ref{sec:Ahlswede_models}, in Case (1) with the choice $c=1$, the BCF problem with $\wcf(\leq 1)=\wcf(1)$ is equivalent to the identification via channels problem studied in \cite{ahlswede_identification_1989} where the double exponential capacity result was established.  The same result holds if the Hamming weight is a constant larger than $1$.
\item For Case (2), the choice $S=m^{\beta}$ has been studied in \cite[Section 3]{ahlswede_general_2008} under the name $K$-identification (see Model 3 in Section \ref{sec:Ahlswede_models}) where the Hamming weight $S$ (denoted by $K$ in \cite{ahlswede_general_2008}) is parametrized as $2^{\kappa  n}$ (i.e. by $n$ instead of $m$). It was shown that $R=C-2\kappa$ (corresponding to $m=2^{n(C-2\kappa)}$ in our case) is achievable and a converse result of $R\leq C-2\kappa$ (corresponding to $m\leq 2^{n(C-\kappa)}$ in our case) is established. Identifying $2^{\kappa n}$ with $m^{\beta}$, it can be checked that our result matches that in \cite{ahlswede_general_2008}. For example,  our achievable rate $m=2^{\frac{C}{1+2\beta}n}$ is matched with the result $2^{n(C-2\kappa)}$ in~\cite{ahlswede_general_2008} by letting $2^{\kappa n}=m^{\beta}$:
\begin{align*}
2^{n(C-2\kappa)}&=2^{nC}\cdot (2^{\kappa n})^{-2}= 2^{nC}\cdot (m^\beta)^{-2}\\
&=2^{nC}\cdot 2^{\frac{-2\beta }{1+2\beta}Cn}=2^{\frac{C}{1+2\beta}n}
\end{align*}
\item Several communication problems are discussed in \cite[Section 4]{ahlswede_general_2008} whose capacity is equal to the ordinary Shannon capacity, namely, $m=Cn$ asymptotically. These problems include the ranking problem in Example \ref{example:ranking} (Model 4) and the $t$-th bit identification problem in Example \ref{example:t_bit} (Model 6). It is discussed in Section \ref{sec:example_BFC} that the former problem corresponds to a BFC problem with the set $\wcf(\leq 2^m)$ and the latter problem corresponds to a BFC problem with the set $\wcf(\frac{1}{2}2^m)$. Both examples correspond to  Case (6) of our results.
 \end{itemize}

The above results stated with the characterization of the Boolean functions in terms of their preimage under $1$. One can equally choose to work with the preimage under $0$.   In particular, the set $\wcf(S)$, whose elements satisfy $|f^{-1}[1]|=S$ and $|f^{-1}[0]|=2^m-S$, essentially has the same property as the set $\wcf(2^m-S)$, by flipping the output of the functions in the former set. We formalize the observation in the following Lemma.

\begin{lemma}
Given an $(n,m, \wcf(S), \lambda_1,\lambda_2)$ BCF code, we  can construct an $(n,m, \wcf(2^m-S), \lambda_2,\lambda_1)$ BCF code.
\end{lemma}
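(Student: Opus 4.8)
The plan is to exploit the elementary duality between a Boolean function and its negation. Given $f$ with $|f^{-1}[1]| = S$, set $\bar f := 1 - f$; then $\bar f^{-1}[1] = f^{-1}[0]$, which has cardinality $2^m - S$, so $\bar f \in \wcf(2^m-S)$. The assignment $f \mapsto \bar f$ is an involution, hence a bijection from $\wcf(S)$ onto $\wcf(2^m-S)$ (consistent with ${2^m \choose S} = {2^m \choose 2^m - S}$). I would therefore index the dual class as $\wcf(2^m-S) = \{\bar f_j : j \in \intset{|\wcf(S)|}\}$, reusing the index set of the given code.

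Next I would keep the encoders unchanged and complement the decoding sets. Concretely, from the given $(n, m, \wcf(S), \lambda_1, \lambda_2)$ code with encoders $(Q_i)_{i\in\{0,1\}^m}$ and decoding sets $(D_j)_j$ (where $D_j$ is paired with $f_j$), I construct the code with the same encoders $(Q_i)$ and decoding sets $D_j' := D_j^c \subset \mathcal Y^n$, where $D_j'$ is now paired with $\bar f_j$. The only structural point to check is that complementing a subset of $\mathcal Y^n$ again yields an admissible decoding set, which is immediate.

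Finally I would verify the two error conditions, the key observation being that negating both the function value and the decoding set interchanges the roles of false negatives and false positives. For the new code, the false-negative quantity for a pair $(i,j)$ is $\ve 1_{\bar f_j(i)=1}\, Q_i W((D_j')^c) = \ve 1_{f_j(i)=0}\, Q_i W(D_j) \le \lambda_2$, which is exactly the old false-positive bound; symmetrically, the new false-positive quantity is $\ve 1_{\bar f_j(i)=0}\, Q_i W(D_j') = \ve 1_{f_j(i)=1}\, Q_i W(D_j^c) \le \lambda_1$, the old false-negative bound. Hence the constructed code is an $(n, m, \wcf(2^m-S), \lambda_2, \lambda_1)$ BFC code. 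There is no substantive obstacle here; the proof is pure bookkeeping, and the only care needed is to keep the pairing between $\bar f_j$ and $D_j'$ consistent so that the two error inequalities line up with the swapped parameters $\lambda_2, \lambda_1$.
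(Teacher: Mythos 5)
Your proposal is correct and follows essentially the same route as the paper's proof: negate each function ($\tilde f_j = 1-f_j$), keep the encoders, complement the decoding sets, and observe that the two error conditions swap roles, yielding the $(n,m,\wcf(2^m-S),\lambda_2,\lambda_1)$ code. No further comment is needed.
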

\begin{proof}
Given the encoder $Q_i, D_j$ of the $(n,m, \wcf(S), \lambda_1,\lambda_2)$ BCF code, and for each function $f_j\in\wcf(S)$, we define a new set of functions $\tilde f_j=1-f_j$,  and the corresponding encoders and decoders as
\begin{align*}
\tilde Q_i&:=Q_i\\
\tilde D_j&:=D_j^c
\end{align*}
Clearly, the set $\{\tilde f_j\}_j=\wcf(2^m-S)$, and it  holds $\tilde f_j^{-1}[1]= f_j^{-1}[0]$, $\tilde f_j^{-1}[0]= f_j^{-1}[1]$. To characterize the error probability of the new code, we have
\begin{align*}
\ve 1_{\tilde f_j(i)=1}Q_iW(\tilde D_j^c)=\ve 1_{f_j(i)=0}Q_iW(D_j)\leq \lambda_2\\
\ve 1_{\tilde f_j(i)=0}Q_iW(\tilde D_j)=\ve 1_{f_j(i)=1}Q_iW(D_j^c)\leq \lambda_1
\end{align*}
where the inequalities follow from the definition of the $(n,m, \wcf(S), \lambda_1,\lambda_2)$ BCF code.
\end{proof}

A similar result holds for the set of functions $\wcf(\leq S)$, whose proof is essentially the same as the proof of the above Lemma.
\begin{lemma}
Given an $(n,m, \wcf(\leq S), \lambda_1,\lambda_2)$ BCF code, we can construct an $(n,m, \wcf(\geq 2^m-S), \lambda_2,\lambda_1)$ BCF code.
\end{lemma}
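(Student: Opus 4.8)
The plan is to mirror the proof of the preceding Lemma almost verbatim, since the only genuine difference is the bookkeeping for the Hamming weights. Suppose we are given an $(n,m,\wcf(\leq S),\lambda_1,\lambda_2)$ BFC code with encoders $(Q_i)_{i\in\{0,1\}^m}$, decoding sets $(D_j)_{j\in\intset{|\mc F|}}$ and function family $\mc F\subseteq\wcf(\leq S)$. I would define the negated functions $\tilde f_j:=1-f_j$, keep the encoders, and complement the decoding sets:
\begin{align*}
\tilde Q_i:=Q_i,\qquad \tilde D_j:=D_j^c .
\end{align*}

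First I would verify the membership condition for the new family. The map $f\mapsto 1-f$ is an involution on $\bfm$ with $\tilde f_j^{-1}[1]=f_j^{-1}[0]$, and $|f_j^{-1}[0]|=2^m-|f_j^{-1}[1]|\geq 2^m-S$ whenever $|f_j^{-1}[1]|\leq S$; hence every $\tilde f_j$ lies in $\wcf(\geq 2^m-S)$. When $\mc F=\wcf(\leq S)$ this map is in fact a bijection onto $\wcf(\geq 2^m-S)$, so $\{\tilde f_j\}_j=\wcf(\geq 2^m-S)$.

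Then I would check that the two error events simply swap roles. Using $\tilde f_j(i)=1\iff f_j(i)=0$, together with $\tilde D_j^c=D_j$ and $\tilde D_j=D_j^c$, we get for all $i,j$
\begin{align*}
\ve 1_{\tilde f_j(i)=1}\tilde Q_iW(\tilde D_j^c)&=\ve 1_{f_j(i)=0}Q_iW(D_j)\leq\lambda_2\\
\ve 1_{\tilde f_j(i)=0}\tilde Q_iW(\tilde D_j)&=\ve 1_{f_j(i)=1}Q_iW(D_j^c)\leq\lambda_1 ,
\end{align*}
where the two bounds are precisely the false-positive and false-negative constraints of the original code. This exhibits $(\tilde Q_i),(\tilde D_j)$ as an $(n,m,\wcf(\geq 2^m-S),\lambda_2,\lambda_1)$ BFC code.

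I do not expect any real obstacle here: the argument is identical to the $\wcf(S)$ case, and the only extra line of justification is the weight accounting, namely that complementing the output turns the constraint $|f^{-1}[1]|\le S$ into $|f^{-1}[1]|\ge 2^m-S$ via the identity $|f^{-1}[0]|=2^m-|f^{-1}[1]|$.
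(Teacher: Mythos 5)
Your proposal is correct and is exactly the argument the paper intends: it mirrors the proof of the preceding lemma (negate the functions, keep the encoders, complement the decoding sets, observe the two error constraints swap), with the only extra point being the weight accounting $|f^{-1}[0]|=2^m-|f^{-1}[1]|\geq 2^m-S$, which you handle. The paper itself only remarks that the proof is essentially the same as that of the previous lemma, so no further comparison is needed.
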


%

\section{Proof of achievability}
\label{sec:proof_achievability}

In this section we prove the achievability result in Theorem \ref{thm:achievability}. Central to the proof is the following result, stated in~\cite[Proposition 1]{ahlswede_identification_1989}. The following proposition is a slightly modified version of \cite[Proposition 1]{ahlswede_identification_1989}. 

\begin{proposition}[Maximal code, modified from \cite{ahlswede_identification_1989}, Proposition 1]
\label{prop:maximal_code}
Let $\mc Z$ be a finite set with cardinality $|\mc Z|>6$  and let $\lambda\in(0,1/2)$ be given. Let $\epsilon<\frac{1}{6}$ and $\epsilon |\mc Z|\geq 1$. Then a family $A_1,\ldots, A_N$ of subsets of $\mc Z$ satisfying the following properties exists 
\begin{align*}
|A_i|&= M':=\lceil\epsilon |\mc Z|\rceil\geq 1, i=1,\ldots, N\\
|A_i \cap A_j|&<\lambda M', i,j=1,\ldots, N, i\neq j
\end{align*}
and
\begin{align*}
N\geq=H(\lambda,M')M'^{-1}\left( \frac{1-\epsilon}{2\epsilon}\right)^{\lceil\lambda M'\rceil}
\end{align*}
where $H(\lambda,M'):={M'\choose \lceil\lambda M'\rceil}^{-1}$.  
\end{proposition}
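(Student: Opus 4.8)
The plan is to prove the proposition by the greedy / maximal-packing argument of Ahlswede--Dueck, keeping careful track of the ceilings. Write $N_Z := |\mc Z|$, $M' := \lceil \epsilon N_Z \rceil$ and $k := \lceil \lambda M' \rceil$, and note first that the hypotheses $|\mc Z| > 6$ and $\epsilon < 1/6$ guarantee $1 \le M' \le N_Z$ and $0 \le k \le M'$, so that all binomial coefficients below are well defined. I would build the family greedily among the $M'$-element subsets of $\mc Z$: starting from an arbitrary $M'$-subset, repeatedly adjoin any $M'$-subset whose intersection with each previously chosen set has size $< \lambda M'$ (equivalently, $\le k-1$), and stop when no further subset can be added. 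Let $A_1,\dots,A_N$ be the resulting maximal family. Then $|A_i| = M'$ and $|A_i \cap A_j| < \lambda M'$ for $i \ne j$ hold by construction, so the only real work is to lower-bound $N$.

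The heart of the argument is a covering bound extracted from maximality. For every $M'$-subset $B$ of $\mc Z$ there is some index $i$ with $|B \cap A_i| \ge k$: if $B$ equals one of the $A_i$ this is clear since $|B \cap A_i| = M' \ge k$, and otherwise $B$ must have failed the admissibility test against some $A_i$, i.e.\ $|B \cap A_i| \ge \lambda M'$, hence $\ge k$. Therefore the sets $\mathcal B_i := \{ B \in \binom{\mc Z}{M'} : |B \cap A_i| \ge k \}$ cover $\binom{\mc Z}{M'}$, giving $\binom{N_Z}{M'} \le \sum_{i=1}^N |\mathcal B_i| \le N \cdot \max_i |\mathcal B_i|$. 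To bound $|\mathcal B_i|$ from above I would use an injection: send $B \in \mathcal B_i$ to the pair $(S, B \setminus S)$, where $S$ is the $k$ smallest elements of $B \cap A_i$ in a fixed ordering of $\mc Z$; this is injective and lands in $\bigcup_{S \in \binom{A_i}{k}} \{S\} \times \binom{\mc Z \setminus S}{M'-k}$, so $|\mathcal B_i| \le \binom{M'}{k}\binom{N_Z-k}{M'-k}$ (a cruder variant replaces $\binom{N_Z-k}{M'-k}$ by $\binom{N_Z}{M'-k}$ at the cost of an extra factor $M'$, which is still within the claimed bound).

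Combining the two displays and using the identity $\binom{N_Z}{M'}\binom{M'}{k} = \binom{N_Z}{k}\binom{N_Z-k}{M'-k}$ yields $N \ge \binom{N_Z}{k} \big/ \binom{M'}{k}^2 = H(\lambda,M') \cdot \binom{N_Z}{k}/\binom{M'}{k}$, since $H(\lambda,M') = \binom{M'}{k}^{-1}$. Then $\binom{N_Z}{k}/\binom{M'}{k} = \prod_{j=0}^{k-1} \frac{N_Z-j}{M'-j} \ge (N_Z/M')^k$, because each factor is at least $N_Z/M'$ (as $N_Z \ge M'$). Finally $M' = \lceil \epsilon N_Z \rceil \le \epsilon N_Z + 1 \le 2\epsilon N_Z$ (using $\epsilon N_Z \ge 1$), so $N_Z/M' \ge 1/(2\epsilon) \ge (1-\epsilon)/(2\epsilon)$, and since $M'^{-1} \le 1$ this delivers $N \ge H(\lambda,M')\, M'^{-1}\, \big(\tfrac{1-\epsilon}{2\epsilon}\big)^{\lceil \lambda M' \rceil}$, with room to spare. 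I expect the only real obstacle to be bookkeeping around the ceilings: ensuring $M' \ge k$ and $M'-k \ge 0$ so the binomials and the product above make sense, and checking that the side conditions $|\mc Z| > 6$, $\epsilon |\mc Z| \ge 1$ and $\epsilon < 1/6$ are precisely what is needed to absorb the rounding losses (and, in the variant that sums over the exact intersection size, to keep the relevant binomial terms in their monotone range).
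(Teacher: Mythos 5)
Your proof is correct, and it follows the same overall greedy/maximal-packing strategy as the paper's proof of Proposition~\ref{prop:maximal_code}, but the key counting step is handled differently. The paper bounds the number of $M'$-subsets meeting a fixed $A_1$ in at least $\lceil\lambda M'\rceil$ points by the sum $\sum_{i=\lceil\lambda M'\rceil}^{M'}\binom{|\mc Z|-M'}{M'-i}\binom{M'}{i}$, argues that the first summand is the largest (this is where the hypotheses $\lambda<1/2$ and $1/\epsilon>6$ are actually invoked), and pays a factor $M'$ for the number of summands; it then rewrites $\binom{|\mc Z|}{M'}T^{-1}$ as $H(\lambda,M')M'^{-1}\prod_{i=1}^{\lceil\lambda M'\rceil}\frac{|\mc Z|-M'+i}{M'-\lceil\lambda M'\rceil+i}$ and bounds each ratio by $\frac{1-\epsilon}{2\epsilon}$. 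Your injection $B\mapsto(S,B\setminus S)$ bounds the same quantity directly by $\binom{M'}{k}\binom{|\mc Z|-k}{M'-k}$ with $k=\lceil\lambda M'\rceil$, which sidesteps the dominant-term argument entirely, avoids the extra factor of $M'$, and after the identity $\binom{|\mc Z|}{M'}\binom{M'}{k}=\binom{|\mc Z|}{k}\binom{|\mc Z|-k}{M'-k}$ reduces to the product $\prod_{j=0}^{k-1}\frac{|\mc Z|-j}{M'-j}\ge(1/2\epsilon)^k$. The net effect is a marginally stronger conclusion, $N\ge H(\lambda,M')(1/2\epsilon)^{k}$, from which the stated bound follows since $M'^{-1}\le 1$ and $1-\epsilon\le 1$; your route also makes clear that the hypotheses $\lambda<1/2$ and $\epsilon<1/6$ are only needed for the bookkeeping ($k\le M'\le|\mc Z|$), not for the counting itself. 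Both arguments are valid; the paper's has the advantage of being a minimal modification of \cite[Proposition 1]{ahlswede_identification_1989}, while yours is self-contained and slightly tighter.
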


The above result gives a tighter lower bound on $N$ than that stated in \cite[Proposition 1]{ahlswede_identification_1989}, which is the essential difference that is needed to prove Theorem \ref{thm:achievability}.  The proof is almost the same as the proof of \cite[Proposition 1]{ahlswede_identification_1989}, and is included in the Appendix for completeness.

\textbf{Construction of BFC codes.} We use the fact that for a channel with the (Shannon) capacity $C>0$, for all  $\delta\in(0,1/2)$, $\xi>0$, and  all sufficiently large $n$, there exists an $n$-length transmission code $\{(\ve u_i, \mc E_i)| i=1,\ldots, M\}$, where all $\ve u_i\in\mathcal X^n$ are different, with the following property
\begin{align}
&\mc E_i\cap\mc E_j=\emptyset, i\neq j, \nonumber\\
&W(\mc E_i^c|\ve u_i)\leq \delta, \nonumber\\
&M= 2^{n(C-\xi)}.  \label{eq:good_channel_code}
\end{align}

Fix a small error probability $\delta\in(0,1/2)$,  let $\mc Z=\{\ve u_1,\ldots, \ve u_M\}$ be the set of codewords from a transmission code satisfying the properties \eqref{eq:good_channel_code}\footnote{Here we assume $2^{n(C-\xi)}$ is an integer for simplicity of the analysis. Without this assumption, we can assume that $M$ is some integer that satisfies $2^{n(C-\xi_1)}\leq M\leq 2^{n(C-\xi_2)}$ for arbitrarily small $\xi_1$ and $\xi_2$, and the same result can be proved by modifying the proof steps accordingly.}.   Then, there exist $N$ subsets of $\mc Z$ with the property specified in Proposition \ref{prop:maximal_code}. We use $A_i, i\in \{0,1\}^m$ (indexed by binary sequences of length $m$) to denote the first $2^m$ of the $N$ subsets, where $m$ is chosen to be an integer such that
\begin{align}
2^m\leq H(\lambda,M')M'^{-1}\left( \frac{1-\epsilon}{2\epsilon}\right)^{\lceil\lambda M'\rceil}=:\tilde N
\label{eq:m}
\end{align}
where $M'=\lceil\epsilon M\rceil$. Proposition \ref{prop:maximal_code} guarantees the existence of $\{A_i\}_{i\in\{0,1\}^m}$ if \eqref{eq:m} is satisfied. Also notice that $\epsilon$ should satisfy 
\begin{align}
\frac{1}{6}>\epsilon\geq  \frac{1}{|\mc Z|}=2^{-n(C-\xi)}
\label{eq:epsilon_const}
\end{align}
by the assumption in Proposition \ref{prop:maximal_code}. 

The encoder and the decoder of the BFC code are defined as follows.

\textit{Encoder:} Define for every $i\in\{0,1\}^m$, the stochastic encoder $Q_i$ to be the uniform distribution over $A_i$:
\begin{align}
Q_i(\ve x)=\ve 1_{\ve x\in A_i}\frac{1}{|A_i|} 
\label{eq:encoder}
\end{align}

\textit{Decoder: }The decoder set for the function $f_j$ is defined to be
\begin{align}
\mathcal D_j &:= \bigcup_{i\in f_j^{-1}[1]}\bigcup_{k:  \ve u_k\in A_i} \mc E_k\nonumber\\
&=\bigcup_{k: \ve u_k\in B_j}\mc E_k \label{eq:Dj_def}
\end{align}
where we define $B_j:=\bigcup_{\ell\in f_j^{-1}[1]}A_\ell$. 

\textbf{Error probability analysis.} The following lemma serves as an intermediate result to characterize the error probability of the above coding scheme.

\begin{lemma}
For all $\delta,\xi\in(0,1/2)$, for all sufficiently large $n$, for all  $\lambda\in (0,1/2)$, and all $m$ and $\epsilon$ such that \eqref{eq:m} and \eqref{eq:epsilon_const} hold, the  $(n,m,\mc F(\leq S), \lambda_1, \lambda_2)$  BCF code  constructed above satisfies 
\begin{align*}
\lambda_1=\delta, \lambda_2=S\lambda+\delta
\end{align*}
\label{lemma:error_probability}
\end{lemma}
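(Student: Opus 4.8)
The plan is to check the two error conditions directly against the construction in \eqref{eq:encoder}--\eqref{eq:Dj_def}; the combinatorial content has already been extracted into Proposition~\ref{prop:maximal_code}, so what remains is essentially bookkeeping. The single identity I would use throughout is that, since $Q_i$ is uniform on $A_i$ and $|A_i|=M'$,
\begin{align*}
Q_iW(D)=\frac{1}{M'}\sum_{k:\,\ve u_k\in A_i}W(D\,|\,\ve u_k)
\end{align*}
for any measurable $D\subseteq\mc Y^n$; both bounds then follow by comparing $\mc D_j$ to the disjoint decoding regions $\mc E_k$ of the underlying transmission code.

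For the false negative bound, I would argue as follows. If $f_j(i)=1$ then $i\in f_j^{-1}[1]$, so $A_i\subseteq B_j$; hence every $\ve u_k\in A_i$ lies in $B_j$, which by \eqref{eq:Dj_def} forces $\mc E_k\subseteq\mc D_j$ and therefore $W(\mc D_j\,|\,\ve u_k)\ge W(\mc E_k\,|\,\ve u_k)\ge 1-\delta$. Averaging over $\ve u_k\in A_i$ gives $Q_iW(\mc D_j)\ge 1-\delta$, i.e.\ $Q_iW(\mc D_j^c)\le\delta$, uniformly over all such $i,j$; this yields $\lambda_1=\delta$.

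For the false positive bound, which is the more delicate of the two, I would split $A_i$ according to membership in $B_j$ when $f_j(i)=0$. For $\ve u_k\in A_i\setminus B_j$, the index $k$ does not appear in the union \eqref{eq:Dj_def}, so disjointness of the $\mc E_{k'}$ gives $\mc D_j\subseteq\mc E_k^c$ and hence $W(\mc D_j\,|\,\ve u_k)\le\delta$; for $\ve u_k\in A_i\cap B_j$ I would use only $W(\mc D_j\,|\,\ve u_k)\le 1$. The key point is to bound the number of the latter: writing $A_i\cap B_j=\bigcup_{\ell\in f_j^{-1}[1]}(A_i\cap A_\ell)$ and noting that $\ell\ne i$ for every such $\ell$ (this is exactly where $f_j(i)=0$ enters), the separation property $|A_i\cap A_\ell|<\lambda M'$ of Proposition~\ref{prop:maximal_code} together with $|f_j^{-1}[1]|\le S$ gives $|A_i\cap B_j|\le S\lambda M'$ by a union bound. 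Then
\begin{align*}
Q_iW(\mc D_j)\le\frac{1}{M'}\bigl(|A_i\cap B_j|+M'\delta\bigr)\le S\lambda+\delta,
\end{align*}
which is the asserted $\lambda_2=S\lambda+\delta$.

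I do not expect a genuine obstacle in this lemma — it is a direct consequence of the construction and Proposition~\ref{prop:maximal_code}. The two places that need care are remembering that the pairwise intersection bound is valid only for distinct indices (handled by $f_j(i)=0$), and using the disjointness of the base code's decoding regions to kill the contribution of codewords outside $B_j$. The substantive work lies afterwards, in Theorem~\ref{thm:achievability}: substituting the lower bound on $\tilde N$ from Proposition~\ref{prop:maximal_code} into \eqref{eq:m} and choosing $\lambda$ and $\epsilon$ (as functions of $n$) so as to drive $\lambda_2\to 0$ while keeping $m$ as large as the target scaling function permits.
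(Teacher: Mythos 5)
Your proposal is correct and follows essentially the same route as the paper's proof in Appendix~\ref{appendix:error_probability}: the false-negative bound via $A_i\subseteq B_j$ and $\mc E_k\subseteq \mc D_j$, and the false-positive bound by splitting $A_i$ into $A_i\cap B_j$ (bounded by the union bound over $f_j^{-1}[1]$ and the pairwise intersection property of Proposition~\ref{prop:maximal_code}) and $A_i\setminus B_j$ (killed by disjointness of the $\mc E_k$). Your explicit remark that $\ell\neq i$ is guaranteed by $f_j(i)=0$ is a point the paper leaves implicit, but the argument is the same.
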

The lemma is proved in Appendix \ref{appendix:error_probability}. It shows that the false negative error $\lambda_1$ can be made arbitrarily small as the result holds for all $\delta\in(0,1/2)$.   In the rest of the proof, we upper bound the false positive error $\lambda_2=S\lambda+\delta$ for different values of $S$ and an appropriate choice of $\lambda$. 


{\bf Case 1 (constant $S$).} We consider the case when $S$ is some positive integer not depending on $m$ and $n$. In this case we choose $m=\lceil2^{(C-\xi')n}\rceil$ for some $\gamma>\xi'>\xi$, with $\xi$ given in  \eqref{eq:good_channel_code} and $\gamma$  given as per Definition \ref{def:rate}.  We also choose $\lambda = \delta/S\in (0,1/2)$ and $\epsilon=\min\{\frac{1}{7},\frac{1}{2^{2S/\delta+1}+1}\}$.  It is easy to see that with this choice of $\epsilon$, \eqref{eq:epsilon_const} is satisfied for large enough $n$. 

To verify (\ref{eq:m}),  we first lower bound $\tilde N$ as
\begin{align*}
\tilde N&= H(\lambda,M')M'^{-1}\left( \frac{1-\epsilon}{2\epsilon}\right)^{\lceil\lambda M'\rceil} \nonumber\\
&\stackrel{(a)}{\geq}  M'^{-1} 2^{M'\lambda \log((1-\epsilon)/2\epsilon))}H(\lambda,M') \nonumber\\
&\stackrel{(b)}{\geq} \frac{1}{\epsilon M+1} 2^{M' (\lambda\log((1-\epsilon)/2\epsilon)-1) }\\
&=\frac{1}{\epsilon M+1} 2^{M' (\delta\log ((1-\epsilon)/2\epsilon)/S-1) }
\end{align*}
where $(a)$ uses ${\lceil\lambda M'\rceil}\geq \lambda M'$, and  $(b)$ uses the fact $M'\leq \epsilon M+1$  and
\begin{align*}
H(\lambda,M') = {M' \choose \lceil\lambda M'\rceil}^{-1} \geq 2^{-M'},
\end{align*}
due to the bound on the binomial coefficient ${n\choose k}\leq 2^n$. In the last step we substitute $\lambda = \delta/S$. It is easy to verify that with the $\epsilon$ chosen above, we have $\delta\log ((1-\epsilon)/2\epsilon)/S\geq 2$ because $\epsilon\leq \frac{1}{2^{2S/\delta+1}+1}$. We have the lower bound
\begin{align*}
\tilde N&\geq \frac{1}{\epsilon M+1} 2^{M'}\\
&\geq \frac{1}{M+1} 2^{\epsilon M}\\
&\geq \frac{1}{|\mc X|^n+1}2^{\epsilon  2^{n(C-\xi)}}
\end{align*}
In the last step we used the fact that the number of codewords $|M|$ is smaller than $|\mc X|^n$. 

With our choice of $m$,  \eqref{eq:m} is satisfied if we can show the following inequality holds 
\begin{align*}
\lceil 2^{(C-\xi')n}\rceil\leq -\log(|\mc X|^n+1)+\epsilon 2^{n(C-\xi)}
\end{align*}
The above inequality holds if
\begin{align*}
2^{(C-\xi')n+1}\leq -\log(|\mc X|^n+1)+\epsilon 2^{n(C-\xi)}
\end{align*}
which is equivalent to
\begin{align*}
2^{-(\xi'-\xi)n+1}\leq (-\log (|\mc X^n|+1))2^{-n(C-\xi)}+\epsilon.
\end{align*}
Observe that as $n\rightarrow \infty$, the RHS of the above inequality approaches $\epsilon>0$ while the LHS approaches $0$ for all $\xi'>\xi$. Hence we have shown that $m=\lceil 2^{(C-\xi')n}\rceil$ satisfies \eqref{eq:m} for  $\xi'>\xi$ and sufficiently large $n$.  

With both \eqref{eq:m} and \eqref{eq:epsilon_const} verified, and the choice $\lambda=\delta/S$, the false positive error is upper bounded by $\lambda_2=S\lambda+\delta= 2\delta$. This shows that  $C$ is an achievable computation rate with the scaling function $L(R,n)=2^{Rn}$ according to Definition \ref{def:rate}.

{\bf Case 2 ($S=c m^\beta$).}  We consider the case when $S=cm^{\beta}$ for some constant $c,\beta>0$  not depending on $m, n$.   In this case we choose $\lambda = 2^{-\frac{\beta C}{1+2\beta}n}$, $\epsilon=\frac{\lambda}{4e}$, and $m=\left\lceil2^{(\frac{C}{1+2\beta}-\xi')n}\right\rceil$ for some $\xi'$ chosen as in Case 1.   It holds $\lambda<1/2$ for large enough $n$, and it is easy to see that \eqref{eq:epsilon_const} is also satisfied for large enough $n$.


To verify \eqref{eq:m}, first notice that $H(\lambda, M')$ can be lower bounded  as
\begin{align*}
H(\lambda, M')&={M' \choose \lceil\lambda M'\rceil}^{-1}\\
&\geq 2^{- \lceil\lambda M'\rceil\log \frac{eM'}{ \lceil\lambda M'\rceil}}\\
&\geq 2^{-(\lambda M'+1)\log \frac{e}{\lambda}}
\end{align*}
where we used the upper bound for binomial coefficient ${n\choose k}\leq (ne/k)^k=2^{k\log(en/k)}$. Then we lower bound $\tilde N$ as
\begin{align}
\tilde N&= H(\lambda,M')M'^{-1}\left( \frac{1-\epsilon}{2\epsilon}\right)^{\lceil\lambda M'\rceil}\geq 2^{-(\lambda M'+1)\log \frac{e}{\lambda}}M'^{-1}\left( \frac{1-\epsilon}{2\epsilon}\right)^{\lceil\lambda M'\rceil}\nonumber\\
&\geq  M'^{-1} 2^{M'\lambda \left(\log((1-\epsilon)/2\epsilon)-\log \frac{e}{\lambda}\right) + \log\frac{e}{\lambda} }\nonumber\\
&\geq \frac{1}{\epsilon M+1}2^{M\epsilon\lambda \left(\log((1-\epsilon)/2\epsilon)-\log \frac{e}{\lambda}\right) }\label{eq:N_lb_m_beta}
\end{align}
where we used $\lceil\lambda M'\rceil\geq \lambda M'$, $M'\leq \epsilon M+1$ and $\log\frac{e}{\lambda}\geq 0$ (because $\lambda\in(0,1/2)$) in the last two inequalities, respectively. With our  choice of $\epsilon$, we can lower bound the term in the exponent
\begin{align}
\log\frac{1-\epsilon}{2\epsilon}- \log \frac{e}{\lambda}
&= \log \left(\frac{2e}{\lambda}-\frac{1}{2} \right)-\log \frac{e}{\lambda} \nonumber \\
&= \log (2-\frac{\lambda}{2e}) \nonumber\\
&\geq  \log(2-1/4e)\label{eq:lb_exponent}
\end{align}
as $\lambda<1/2$.  Using  (\ref{eq:lb_exponent}) in  (\ref{eq:N_lb_m_beta}), and recalling $M=2^{-n(C-\xi)}$,  we can further lower bound $\tilde N$ as (here we use $\exp\{\cdot\}$ to denote $2^{\{\cdot\}}$)
\begin{align*}
\tilde N&\geq \frac{1}{\epsilon M + 1}\exp\left\{\frac{\log(2-1/4e)}{4e}\cdot 2^{(C-\xi)n}\cdot 2^{-\frac{2\beta C}{1+2\beta}n}\right\}\\
&\geq \frac{1}{M+1} \exp\left\{C' 2^{(\frac{ C}{1+2\beta}-\xi)n}\right\}\\
&\geq \frac{1}{|\mc X|^n+1} \exp\left\{C' 2^{(\frac{ C}{1+2\beta}-\xi)n}\right\}
\end{align*}
where we define $C':=\frac{\log(2-1/4e)}{4e}$.  Now to show \eqref{eq:m} in this case, it suffices to show
\begin{align*}
2^m\leq  \frac{1}{|\mc X|^n+1} \exp\left\{C' 2^{(\frac{ C}{1+2\beta}-\xi)n}\right\}
\end{align*}
Substituting the choice $m=\left\lceil2^{(\frac{C}{1+2\beta}-\xi')n}\right\rceil$ in the above expression and taking the logarithm on both sides, we have
\begin{align*}
\left\lceil2^{(\frac{C}{1+2\beta}-\xi')n}\right\rceil\leq C' 2^{(\frac{ C}{1+2\beta}-\xi)n}-\log (|\mc X|^n+1)
\end{align*}
The above inequality is satisfied if
\begin{align*}
2^{(\frac{C}{1+2\beta}-\xi')n+1}\leq C' 2^{(\frac{ C}{1+2\beta}-\xi)n}-\log (|\mc X|^n+1)
\end{align*}
or equivalently
\begin{align*}
2^{-(\xi'-\xi)n+1}\leq C' -2^{-(\frac{ C}{1+2\beta}-\xi)n}\log (|\mc X|^n+1)
\end{align*}
Observe that as $n\rightarrow \infty$, the RHS of the above inequality approaches $C'>0$ and the LHS approaches $0$ for  $\xi'>\xi$. Hence we have verified \eqref{eq:m}.

Now we upper bound the error probability. Using $S=cm^{\beta}$, the false positive error probability is upper bounded as
\begin{align*}
\lambda_2&= S\lambda+\delta\\
&=cm^\beta\lambda+\delta\\
&\stackrel{(a)}{=} c  (\lceil2^{(\frac{C}{1+2\beta}-\xi')n}\rceil)^{\beta} 2^{-\frac{\beta C}{1+2\beta}n}+\delta\\
&\leq c  2^{(\frac{C}{1+2\beta}-\xi')\beta n+\beta} 2^{-\frac{\beta C}{1+2\beta}n}+\delta\\
&=c 2^{-\beta\xi' n+\beta}+\delta
\end{align*}
where in $(a)$ we substitute our choice of $m$ and $\lambda$. Clearly, the error probability can be made arbitrarily small for large enough $n$ and a sufficiently small $\delta$. Therefore we have shown that the error probability can be made arbitrarily small with $m>2^{(\frac{C}{1+2\beta}-\xi')n}$ for $\xi'>\xi>0$. In other words, $C/(1+2\beta)$ is an achievable computation rate with the scaling function $L(R,n)=2^{Rn}$.

\textbf{Case 3 ($S=cm^{\log m}$).} In this  case we choose $\lambda = 2^{-(C-\xi)n/2+\sqrt{Cn/2}}$, $\epsilon = 2^{-(C-\xi)n/2}$ and  $m=\left\lceil 2^{\sqrt{(C/2-\xi')n}}\right\rceil$ for some $\xi'$ chosen as in Case 1. The condition  \eqref{eq:epsilon_const} is satisfied for large enough $n$. To verify \eqref{eq:m},  we use the bound  derived in \eqref{eq:N_lb_m_beta}, which we restate here
\begin{align*}
\tilde N &\geq \frac{1}{2\epsilon M}2^{M\epsilon\lambda (\log((1-\epsilon)\lambda/\epsilon)-\log 2e) }
\end{align*}
We can lower bound the term
\begin{align*}
\log \frac{(1-\epsilon)\lambda}{\epsilon} &= \log (1- \epsilon)  +\sqrt{Cn/2}\\
&\geq \sqrt{Cn/2}-1
\end{align*}
where the last step holds as $\epsilon<1/6<1/2$.   Recall that $M=2^{(C-\xi)n}$. With the choice of $\lambda$ and $\epsilon$ we have $\lambda\epsilon M = 2^{\sqrt{Cn/2}}$ and 
\begin{align*}
\tilde N&\geq \frac{1}{2}\exp\left\{-(C-\xi)n/2+2^{\sqrt{Cn/2}}(\sqrt{Cn/2}-\log 4e)\right\}
\end{align*}
With the choice $m=\left\lceil 2^{\sqrt{(C/2-\xi')n}}\right\rceil$ and the inequality above, \eqref{eq:m} holds if
\begin{align*}
\left\lceil 2^{\sqrt{(C/2-\xi')n}}\right\rceil\leq -1 -(C-\xi)n/2 +2^{\sqrt{Cn/2}}(\sqrt{Cn/2}-\log 4e)
\end{align*}
which holds if
\begin{align*}
2^{\sqrt{(C/2-\xi')n}+1}\leq -1 -(C-\xi)n/2 +2^{\sqrt{Cn/2}}(\sqrt{Cn/2}-\log 4e)
\end{align*}
Dividing both sides by $2^{\sqrt{Cn/2}}$, the above inequality becomes
\begin{align*}
2^{\sqrt{(C/2-\xi')n}-\sqrt{Cn/2}+1}\leq \sqrt{Cn/2}-\log 4e + 2^{-\sqrt{Cn/2}}(-1 -(C-\xi)n/2)
\end{align*}
The LHS can be written as $2^{\frac{-\xi' n}{\sqrt{(C/2-\xi')n}+\sqrt{Cn/2}}+1}$ (since $\sqrt{a}-\sqrt{b}=(a-b)/(\sqrt{a}+\sqrt{b})$) which converges to $0$ as $n\rightarrow \infty$  whereas the RHS scales as $\sqrt{Cn/2}$ which tends to infinity as $n$ increases. This verified that  \eqref{eq:m} holds for large enough $n$.

The false positive error  is upper bounded as
\begin{align*}
\lambda_2&= S\lambda+\delta\\
&=c2^{(\log m)^2}\lambda+\delta\\
&\leq c2^{\left(1+\sqrt{(C/2-\xi')n}\right)^2}2^{-(C-\xi)n/2+\sqrt{Cn/2}}+\delta\\
&=c2^{-(\xi'-\xi)n+2\sqrt{(C/2-\xi')n}+\sqrt{Cn/2}+1}+\delta.
\end{align*}
Given our choice $\xi'>\xi$, the above quantity  can be made arbitrarily small for a sufficiently large $n$ and a sufficiently small $\delta$. Thus we have shown that $m\geq 2^{\sqrt{(C/2-\xi')n}}$ for all $\xi'>0$. In other words, $C/2$ is an achievable computation rate with the scaling function $L(R,n)=2^{\sqrt{Rn}}$.

\textbf{Case 4 $S=c2^{m^{1/b}}$ for $b>1$.} In this case, we first impose an extra assumption $C>3$, and write $C=3+\delta'$ for some $\delta'>0$. We choose $\lambda = n^{b-1}2^{-(C/3-\xi/2)n}$, $\epsilon = 2^{-(2C/3-\xi/2)n}$, and $m = \lceil \tilde d n^{b}\rceil$ with $\tilde d=C/3-\xi'$ for some $\xi'\in (\xi/2, \delta'/3]$. Notice that this choice of $\epsilon$ satisfies the condition \eqref{eq:epsilon_const} for large $n$. 

To verify \eqref{eq:m}, we use the bound  derived in \eqref{eq:N_lb_m_beta}, which we restate here
\begin{align*}
\tilde N &\geq \frac{1}{2\epsilon M}2^{M\epsilon\lambda (\log((1-\epsilon)\lambda/\epsilon)-\log 2e)  }
\end{align*}
We can lower bound the term
\begin{align*}
\log \frac{(1-\epsilon)\lambda}{\epsilon} &= \log (1- \epsilon) + \log 2^{Cn/3} +(b-1)\log n\\
&\geq \frac{C}{3} n - 1 + (b-1)\log n
\end{align*}
where the last step holds as $\epsilon<1/6<1/2$.   Using  $M=2^{(C-\xi)n}$ and our choice of $\lambda$ and $\epsilon$, we have $\lambda\epsilon M = n^{b-1}$ and (use $\exp\{\cdot\}$ to denote $2^{\{\cdot\}}$)
\begin{align*}
\tilde N&\geq 2^{-1}2^{-(C/3-\xi/2)n} \exp\{n^{b-1}(Cn/3-1+(b-1)\log n - \log 2e)  \}\\
&=2^{-1}\exp\{Cn^{b}/3+ n^{b-1}((b-1)\log n -\log 4e)-n(C/3-\xi/2)\}
\end{align*}
In light of the inequality above, \eqref{eq:m} holds if
\begin{align*}
\lceil \tilde d n^{b}\rceil\leq -1+ \frac{C}{3} n^{b}+ n^{b-1}((b-1)\log n -\log 4e)-n(C/3-\xi/2)
\end{align*}
which holds if
\begin{align*}
\tilde d n^{b}\leq -2+ \frac{C}{3} n^{b}+ n^{b-1}((b-1)\log n -\log 4e)-n(C/3-\xi/2)
\end{align*}
Dividing both sides of the above inequality by $\tilde d n^{b}$ gives
\begin{align*}
1\leq \frac{C/3}{\tilde d} + \frac{(b-1)\log n -\log 4e}{\tilde d}n^{-1}-\frac{2}{\tilde d}n^{-b} - \frac{C/3-\xi/2}{\tilde d}n^{-(b-1)}
\end{align*}
Notice that for $b>1$, the RHS approaches $\frac{C/3}{\tilde d}$ as $n\rightarrow \infty$. As we choose $\tilde d= C/3-\xi'$, the inequality holds for sufficiently large $n$.

The false positive error probability is
\begin{align*}
\lambda_2&= S\lambda+\delta\\
&=c2^{m^{1/b}}\lambda+\delta\\
&\leq c2^{(\tilde d n^b+1)^{1/b}}2^{-(C/3-\xi/2)n}n^{b-1}+\delta&\\
&\stackrel{(a)}{\leq} c2^{\tilde d^{1/b} n+1}2^{-(C/3-\xi/2)n}n^{b-1}+\delta&\\
&=c2^{-n(C/3-\tilde d^{1/b}-\xi/2)+1}n^{b-1}+\delta
\end{align*}
where in (a) we use $(x+y)^a\leq x^a+y^a$ for $a\in(0,1], x,y\geq 0$. Notice due to the assumption $\xi'\leq \delta'/3$, we have $\tilde d = C/3-\xi'\geq (C-\delta')/3=1$, where the last equality follows from $C=3+\delta'$. Therefore it holds that $(\tilde d)^{1/b}\leq \tilde d$ since $\tilde d\geq 1$.  The error probability can be further upper bounded as
\begin{align*}
\lambda_2&\leq c2^{-n(C/3-\tilde d-\xi/2)+1}n^{b-1}+\delta\\
&= c2^{-n(C/3-C/3+\xi'-\xi/2)+1}n^{b-1}+\delta\\
&=c2^{-n(\xi'-\xi/2) +1}n^{b-1}+\delta
\end{align*}
which tends to $0$ as $n\rightarrow\infty$ under the assumption $\xi'>\xi/2$. Therefore we have shown that the error probability vanishes with  $m\geq  \tilde d n^b=(C/3-\xi')n^{b}$ for all  $\xi'>\xi/2$. In other words, $C/3$ is an achievable computation rate for the scaling function $L(R,n)=Rn^b$.

Now consider the case when the capacity of the channel $C$ lies in the interval $(0,3]$. The strategy here is to form a ``super-channel" $W'$ with multiple original channels so that $W'$ has capacity larger than $3$, and our result applies. Specifically, we take the ``super-channel" $W'$ to be the cartesian product of $T:=\lfloor \frac{3}{C}+1\rfloor$ original channel $W$ with capacity $C>0$ whose input and output lie in the space $\mc X^{\otimes T}$ and $\mc Y^{\otimes T}$, respectively. Then it is clear that the capacity $C'$ of $W'$ is given by $C'=CT>C(3/C-1+1)=3$.

Now we need to characterize the effective achievable computation rate by converting our original channel (with $n$ channel uses) to the new channel. Assume that we use the new channel (with a capacity $C'>3$) $\tilde n$ times, the results derived above show that $m\geq (C'/3-\xi')\tilde n^b$. On the other hand, we have the relationship $\tilde n= n/T$ (because every $T$ original channels are used to form a new channel) and $C'=CT$. Plugging the two expressions into the inequality above, we come to the conclusion that $\frac{C}{3T^{b-1}}$ is achievable for the scaling function $L(R,n)=Rn^b$, where $T:=\lfloor \frac{3}{C}+1\rfloor$.

{\bf Case 5 ($S=c2^{m/\log m}$)}. We choose $\lambda=n\log n 2^{-(C-\xi)n/2}$, $\epsilon= 2^{-(C-\xi)n/2}$ and $m=\left\lceil \frac{(C-\xi')}{2}n\log n\right\rceil$ for some $\xi'>\xi$. This choice of $\epsilon$ satisfies the condition \eqref{eq:epsilon_const} for large $n$. To verify \eqref{eq:m}, we use the bound  derived in \eqref{eq:N_lb_m_beta}, which implies
\begin{align*}
\tilde N\geq \frac{1}{2\epsilon M}2^{M\epsilon\lambda (\log((1-\epsilon)\lambda/\epsilon)-\log 2e)  }
\end{align*}
We can lower bound the term
\begin{align*}
\log \frac{(1-\epsilon)\lambda}{\epsilon}&=\log (1-\epsilon)+\log n+\log\log n\\
&\geq \log n+\log\log n-1
\end{align*}
as $\epsilon<1/2$. Recall that $M=2^{(C-\xi)n}$ hence $\lambda\epsilon M=n\log n$, which implies
\begin{align*}
\tilde N&\geq 2^{-1}2^{(C-\xi)n/2}2^{-(C-\xi)n}2^{n\log n (\log n+\log\log n-1-\log 2e)}\\
&=2^{-1-(C-\xi)n+ n\log n(\log n+\log\log n-\log 4e)}
\end{align*}

With the choice of $m$ in this case, \eqref{eq:m} holds if
\begin{align*}
\left\lceil \frac{(C-\xi')}{2}n\log n\right\rceil \leq -1-(C-\xi)n+ n\log n(\log n+\log\log n-\log 4e)
\end{align*}
which holds if
\begin{align*}
 \frac{(C-\xi')}{2}n\log n \leq -2-(C-\xi)n+ n\log n(\log n+\log\log n-\log 4e)
\end{align*}
Dividing both sides by $n\log n$ we have
\begin{align*}
(C-\xi')/2\leq \log n+\log\log n-\log 4e -\frac{2+
(C-\xi)n}{n\log n}
\end{align*}
Clearly, the above inequality holds for large enough $n$ as the RHS grows as $O(\log n)$ whereas the LHS is a constant.

The false positive error probability is upper bounded as
\begin{align*}
\lambda_2&=S\lambda+\delta\\
&=c2^{m/\log m}\lambda+\delta\\
&\leq c\exp\left\{\frac{\frac{C-\xi'}{2} n\log n+1}{\log (C-\xi')/2+\log n+\log \log n}\right\}\lambda+\delta\\
&\leq c\exp\left\{\frac{\frac{C-\xi'}{2} n\log n+1}{\log n}\right\} 2^{-(C-\xi)n/2}n\log n +\delta\\
&=c n\log n2^{-(\xi'-\xi)n/2+1/\log n}+\delta
\end{align*}
For all $\xi'>\xi$, the above quantity can be made arbitrarily small for sufficiently large $n$ and a sufficiently small $\delta$. We have shown that $m\geq \frac{C-\xi'}{2} n\log n$ for  $\xi'>\xi>0$. Since $\xi$ can also be made arbitrarily small, we have shown that $C/2$ is an achievable computation rate with the scaling function $L(R,n)=Rn\log n$.

{\bf Case 6 ($S=c2^{\gamma m})$ for $\gamma\in(0,1]$}. In this case we claim that $C$ is an achievable computation rate with the scaling function $L(R,n)=Rn$. This rate is in fact the same as the achievable transmission rate from Shannon's problem. In other words, to achieve this computation rate, we could use the original error correction codes (described in \eqref{eq:good_channel_code}) and use a ``separation approach", namely to let the receiver recover the message $i$ reliably, and then compute $f_j(i)$ at the receiver. In this sense, the encoder/decoder construction at the beginning of this section is not needed for this result. Nevertheless, we could also proceed in the same way as in the other cases to derive this result. This analysis, though not necessary, is included in Appendix for completeness.


\section{Proof of the converse}
\label{sec:proof_converse}

For the converse results, we follow the idea briefly discussed in \cite{ahlswede_general_2008}. The strategy is to show that any $(n,m,\mc F, \lambda_1, \lambda_2)$ BFC code can be converted to a standard identification code, and the known converse result on identification code will imply a converse result for the BFC code. Recall the definition of the $(n,N,\lambda_1,\lambda_2)$ ID code in Section \ref{sec:Ahlswede_models}. Define $N^*(n,\lambda_1,\lambda_2)$ to be the maximal $N$ such that an $(n,N, \lambda_1,\lambda_2)$ ID code exists. We have the following known converse result for ID codes.

\begin{theorem}[\cite{ahlswede_identification_1989,ahlswede_strong_2002}]
Let $\lambda_1,\lambda_2>0$ such that $\lambda_1+\lambda_2<1$. Then for every $\delta>0$ and every sufficiently large $n$
\begin{align*}
N^*(n, \lambda_1,\lambda_2)\leq 2^{2^{n(C+\delta)}}
\end{align*}
where $C$ is the Shannon capacity of the channel.
\label{thm:converse_id}
\end{theorem}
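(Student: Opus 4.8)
\emph{Proof idea.} The plan is to derive the bound from \emph{channel resolvability}: for a channel of Shannon capacity $C$, the output distribution produced by \emph{any} input distribution on $\mathcal X^n$ can be approximated, to within a fixed total-variation radius, by the output of a \emph{uniform} distribution on a multiset of only $2^{n(C+\epsilon)}$ input sequences. Since distinct identification messages must produce output distributions that are far apart in total variation, counting the possible approximating multisets then yields the doubly-exponential bound.

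First I would turn the error conditions into a separation statement. Given an $(n,N,\lambda_1,\lambda_2)$ ID code $\{(Q_i,D_i)\}_{i=1}^{N}$, set $P_i:=Q_iW$ on $\mathcal Y^n$; the two error constraints of the ID code give $P_i(D_i)\ge 1-\lambda_1$ and $P_j(D_i)\le\lambda_2$ for $j\ne i$, whence
\[
\norm{P_i-P_j}_{\mathrm{TV}}\ \ge\ P_i(D_i)-P_j(D_i)\ \ge\ 1-\lambda_1-\lambda_2\ =:\ 2\mu\ >\ 0 .
\]
So it suffices to upper bound the size of any family of output distributions on $\mathcal Y^n$ that is pairwise $2\mu$-separated in total variation.

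Second, I would invoke the resolvability lemma: for every $\epsilon>0$ and all sufficiently large $n$, for every distribution $Q$ on $\mathcal X^n$ there is a multiset $\{x_1^n,\dots,x_K^n\}\subseteq\mathcal X^n$ with $K\le 2^{n(C+\epsilon)}$ such that $\hat Q:=\frac1K\sum_{\ell=1}^{K}\delta_{x_\ell^n}$ satisfies $\norm{QW-\hat QW}_{\mathrm{TV}}<\mu$. (This is the Han--Verd\'u theorem that the resolvability of a DMC equals its capacity; a weaker ``soft'' version also follows from the random-selection/second-moment argument of \cite{ahlswede_identification_1989}, which draws the $K$ sequences from a capacity-achieving codebook.) Applying it to every $Q_i$ gives $\hat Q_i$ with $\norm{P_i-\hat Q_iW}_{\mathrm{TV}}<\mu$. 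If $\hat Q_i=\hat Q_j$ for some $i\ne j$, then $\norm{P_i-P_j}_{\mathrm{TV}}\le\norm{P_i-\hat Q_iW}_{\mathrm{TV}}+\norm{\hat Q_jW-P_j}_{\mathrm{TV}}<2\mu$, contradicting the display above; hence $i\mapsto\hat Q_i$ is injective and
\[
N\ \le\ \#\{\text{multisets of size }\le 2^{n(C+\epsilon)}\text{ from }\mathcal X^n\}\ \le\ (|\mathcal X|^n+1)^{\,2^{n(C+\epsilon)}}\ =\ 2^{\,2^{n(C+\epsilon)}\log(|\mathcal X|^n+1)} .
\]
The polynomial-in-$n$ factor $\log(|\mathcal X|^n+1)$ in the outer exponent is absorbed by replacing $\epsilon$ with any $\delta>\epsilon$ for $n$ large, which gives $N^*(n,\lambda_1,\lambda_2)\le 2^{2^{n(C+\delta)}}$.

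The main obstacle is the resolvability lemma in the second step, and in particular the requirement that it hold \emph{uniformly over all input distributions} $Q$ on $\mathcal X^n$ --- not just i.i.d.\ or capacity-achieving ones --- with a fixed radius $\mu$ and rate arbitrarily close to $C$. The natural route is a covering/second-moment argument: draw the $K$ sequences i.i.d.\ from $Q$, bound $\E{\norm{QW-\hat QW}_{\mathrm{TV}}}$ by the expected $L_1$ distance, and control the atypical contribution using concentration of the information density together with the estimate $I(X^n;Y^n)\le nC$, which holds for \emph{every} $Q$ under a memoryless channel; this is where $C$ enters as the threshold rate. For infinite $\mathcal X$ or $\mathcal Y$ one first quantizes $W$ to a finite channel of capacity $C+o(1)$. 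A minor bookkeeping point is that only $\mu=(1-\lambda_1-\lambda_2)/2>0$ is used, so the argument covers the full regime $\lambda_1+\lambda_2<1$ rather than only small $\lambda_1,\lambda_2$; this strengthening is the content of \cite{ahlswede_strong_2002}.
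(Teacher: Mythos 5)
This theorem is not proved in the paper at all: it is imported as a known result from the identification literature (the soft converse of \cite{ahlswede_identification_1989} together with the strong converse, valid for any $\lambda_1+\lambda_2<1$, cited as \cite{ahlswede_strong_2002}), and the paper only uses it as a black box in Section \ref{sec:proof_converse}. Your proposal therefore cannot coincide with ``the paper's proof''; what you have written is essentially the standard Han--Verd\'u resolvability proof of the strong converse, and as an outline it is sound: the separation step $\norm{P_i-P_j}_{\mathrm{TV}}\geq 1-\lambda_1-\lambda_2$ is correct, the counting of multisets of size at most $2^{n(C+\epsilon)}$ by $(|\mathcal X|^n+1)^{2^{n(C+\epsilon)}}$ is correct, and absorbing the $\log(|\mathcal X|^n+1)$ factor into $\delta>\epsilon$ gives exactly the stated bound. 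You also correctly identify that everything hinges on resolvability holding \emph{uniformly over arbitrary} input distributions $Q$ on $\mathcal X^n$ at rate $C+\epsilon$ with a fixed variational radius, which is precisely why \cite{ahlswede_identification_1989} alone (whose converse is only ``soft'') does not suffice and a separate strong-converse ingredient is needed.

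Two points where the sketch, as written, would not go through verbatim. First, the phrase ``concentration of the information density together with $I(X^n;Y^n)\leq nC$'' is not enough: for a non-product $Q$ the information density $\log\bigl(W^n(y^n|x^n)/QW(y^n)\bigr)$ need not concentrate around its mean, and a bound on the mean gives no tail bound. The standard repair is a change of measure to the $n$-fold capacity-achieving output distribution $Q^{*n}$: split the density as $\log\bigl(W^n/Q^{*n}\bigr)+\log\bigl(Q^{*n}/QW\bigr)$, control the first term by Chebyshev/Hoeffding (given $x^n$ it is a sum of independent terms with per-letter means $D(W(\cdot|x_i)\|Q^*)\leq C$ and bounded range), and the second by Markov's inequality since $\EE{QW}{Q^{*n}(Y^n)/QW(Y^n)}\leq 1$; only then does the soft-covering bound deliver radius $\mu$ at rate $C+\epsilon$ for every $Q$. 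Second, a trivial bookkeeping issue: you need strict inequality somewhere (approximate to radius $\mu/2$, say), otherwise ``$\leq 2\mu$'' does not contradict ``$\geq 2\mu$''. With these standard fixes your argument is the accepted proof of the cited theorem, just not one the paper itself supplies.
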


To prove the result in Theorem \ref{thm:converse}, we in fact show that there is a subset $\mc F\subset \wcf(S)$ such that any BFC code for the set $\mc F$ satisfies the statement in Theorem \ref{thm:converse}. Specifically, we prove the following result.

\begin{theorem}[Converse for a subset]
Assume the channel $W(\cdot | x^n)$ has a Shannon capacity $C>0$. Then there exists a set of constant weight functions $\mc F\subset \wcf(S)$, such that the computation capacity $\cbfc$ for $\mc F$ satisfies the statements in Theorem \ref{thm:converse}.
\label{thm:converse_subset}
\end{theorem}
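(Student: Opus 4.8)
The plan is to carry out the reduction indicated just before Theorem~\ref{thm:converse_id}: from any BFC code for a carefully chosen subset $\mc F\subset\wcf(S)$, extract a standard identification code on a \emph{large} number of messages $N$, bound $N$ via Theorem~\ref{thm:converse_id}, and read off an upper bound on $m$ in terms of $n$. The key observation is that one (encoder, decoding set) pair of a BFC code already behaves as a membership test ``$i\in f_j^{-1}[1]$?''; if the preimages $f_j^{-1}[1]$ are chosen to form a \emph{packing} of $\{0,1\}^m$ — many $S$-subsets with small pairwise intersection — then these tests are mutually almost-exclusive and assemble into an ID code whose number of messages $N$ far exceeds $2^m$. (Merely taking \emph{disjoint} $S$-subsets would give only $N\leq 2^m/S$, which is far too weak for Cases~(2)--(6); the required boost comes from the controlled overlaps.) Concretely, fix a small constant $\lambda'\in(0,1/2)$ and invoke Proposition~\ref{prop:maximal_code} with $\mc Z=\{0,1\}^m$, $\epsilon=S/2^m$ (so $M'=S$), and parameter $\lambda'$, obtaining subsets $A_1,\dots,A_N\subseteq\{0,1\}^m$ with $|A_k|=S$, $|A_k\cap A_j|<\lambda' S$ for $k\neq j$, and
\[
N\ \geq\ H(\lambda',S)\,S^{-1}\Bigl(\tfrac{1-\epsilon}{2\epsilon}\Bigr)^{\lceil\lambda' S\rceil}.
\]
Set $\mc F:=\{f_{A_1},\dots,f_{A_N}\}\subset\wcf(S)$, where $f_{A_k}^{-1}[1]=A_k$. (If $S$ is a constant fraction of $2^m$, which happens only in Case~(6) with $\gamma=1$, then $\epsilon\geq1/6$ and the proposition does not apply directly; there I instead fix a ``common core'' $T_0\subset\{0,1\}^m$ of size $S-S'$, pack $S'$-subsets $A_k$ of $\{0,1\}^m\setminus T_0$ with $S'$ a suitably small fraction of $2^m-|T_0|$, and put $f_{A_k}^{-1}[1]=T_0\cup A_k$; the rest of the argument is identical with $S$ replaced by $S'$.)

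Given any $(n,m,\mc F,\lambda_1,\lambda_2)$ BFC code with stochastic encoders $(Q_i)_{i\in\{0,1\}^m}$ and decoding sets $(D_k)_{k\in\intset{N}}$, define an ID code on $N$ messages by $\tilde Q_k:=\frac{1}{|A_k|}\sum_{i\in A_k}Q_i$ and $\tilde D_k:=D_k$. Since $f_{A_k}(i)=1$ for all $i\in A_k$, the false-negative bound gives $\tilde Q_kW(\tilde D_k^c)=\frac{1}{|A_k|}\sum_{i\in A_k}Q_iW(D_k^c)\leq\lambda_1$. For $j\neq k$, split $A_k$ into $A_k\cap A_j$ and $A_k\setminus A_j$: on the former we use only $Q_iW(D_j)\leq1$ together with $|A_k\cap A_j|<\lambda' S$, while on the latter $f_{A_j}(i)=0$ so the false-positive bound gives $Q_iW(D_j)\leq\lambda_2$; hence $\tilde Q_kW(\tilde D_j)\leq\lambda'+\lambda_2$. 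Thus we obtain an $(n,N,\lambda_1,\lambda_2+\lambda')$ ID code. Choosing $\lambda_1,\lambda_2$ small enough that $\lambda_1+\lambda_2+\lambda'<1$ — permissible since achievability in Definition~\ref{def:rate} requires a code for \emph{every} $\lambda_1,\lambda_2>0$ — Theorem~\ref{thm:converse_id} gives, for all $\delta>0$ and large $n$, $N\leq2^{2^{n(C+\delta)}}$, i.e.\ $\log N\leq2^{n(C+\delta)}$.

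It remains to lower bound $N$ and match the six cases. With $H(\lambda',S)=\binom{S}{\lceil\lambda' S\rceil}^{-1}\geq2^{-S}$ and $\frac{1-\epsilon}{2\epsilon}\geq\frac{1}{4\epsilon}=\frac{2^m}{4S}$, the displayed bound yields $\log N\geq\lambda' S\,(m-2-\log S)-S-\log S$. In Cases~(1)--(5) one has $\log S=o(m)$ (it is $O(1)$, $O(\log m)$, $O((\log m)^2)$, $O(m^{1/b})$ with $b>1$, $O(m/\log m)$ respectively), and in Case~(6) with $\gamma<1$ one has $\log S=\gamma m+O(1)$; in either situation $m-2-\log S=\Theta(m)$, so $\log N\geq c_0\,Sm$ for a constant $c_0>0$ and all large $m$ (and in the $\gamma=1$ variant, $\log N\geq c_0\,S'$ with $S'$ a fixed fraction of $2^m$, using the $\bigl(\tfrac{1-\epsilon}{2\epsilon}\bigr)^{\lceil\lambda' S'\rceil}$ factor alone). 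Combining with $\log N\leq2^{n(C+\delta)}$ gives $Sm\lesssim2^{n(C+\delta)}$ (resp.\ $S'\lesssim2^{n(C+\delta)}$), and substituting $S$ and solving for $m$ reproduces Theorem~\ref{thm:converse}: e.g.\ $S=cm^{\beta}$ gives $m^{1+\beta}\lesssim2^{n(C+\delta)}$, hence $\cbfc\leq C/(1+\beta)$ for the scaling $L(R,n)=2^{Rn}$; $S=c2^{m^{1/b}}$ gives $m^{1/b}\lesssim(C+\delta)n$, hence $\cbfc\leq C^b$ for $L(R,n)=Rn^b$; and the remaining cases ($c$; $cm^{\log m}$; $c2^{m/\log m}$; $c2^{\gamma m}$) follow in the same way. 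I expect the only real work to be this final bookkeeping — in particular verifying $m-\log S=\Theta(m)$, so that the $\bigl(\tfrac{2^m}{4S}\bigr)^{\lambda' S}$ factor dominates and $\log N$ is $\Theta(Sm)$ rather than $\Theta(m)$, which is precisely why a bounded-intersection packing (not just a family of disjoint $S$-subsets) is essential — together with isolating and separately treating the degenerate regime of Case~(6) at $\gamma=1$.
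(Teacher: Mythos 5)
Your argument is correct and follows the same three-step skeleton as the paper's proof: build a large family of weight-$S$ functions whose preimages form a packing with pairwise intersections at most $\lambda' S$, turn a BFC code for that family into an identification code by averaging the encoders $Q_i$ over each preimage (incurring exactly the additive $\lambda'$ penalty on the false-positive error that you compute), and then invoke Theorem~\ref{thm:converse_id} and solve for $m$ case by case. The only substantive difference is the source of the packing: the paper obtains it from a Gilbert-type bound for constant-weight codes (Lemma~\ref{lemma:gilbert_cw}, packaged as Lemma~\ref{lemma:bcf_to_id}), whereas you reuse Proposition~\ref{prop:maximal_code} on the ground set $\mc Z=\{0,1\}^m$ with $\epsilon=S/2^m$. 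Both are greedy counting arguments and both deliver $\log N=\Theta(Sm)$ whenever $\log S=o(m)$ (and $\Theta\bigl((1-\gamma)mS\bigr)$ when $\log S=\gamma m+O(1)$), so the ensuing bookkeeping is identical to the paper's; neither route buys a better constant. One small point in your favor: the paper's written proof of Case~(6) quietly restricts to $\gamma<1$ (its Lemma~\ref{lemma:bcf_to_id} needs $S\leq 2^{m-1}$), while your common-core construction explicitly covers the $\gamma=1$ corner; just make sure that when you say the argument is ``identical with $S$ replaced by $S'$'' you average $\tilde Q_k$ over $A_k$ only (not over $T_0\cup A_k$), so that the wrong-identification error is $\lambda'+\lambda_2$ rather than $1-(1-\lambda')S'/S+\lambda_2$, and that $\epsilon'=S'/(2^m-|T_0|)$ is taken small enough that $\lambda'\log\frac{1-\epsilon'}{2\epsilon'}>1$, which is what makes $\log N\geq c_0 S'$ with $c_0>0$.
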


Since  $\mc F$ is contained in $\wcf(S)$, any converse result on the code $(n,m,\mc F,\lambda_1,\lambda_2)$ automatically implies a converse result on the code  $(n,m,\wcf(S),\lambda_1,\lambda_2)$, which proves Theorem \ref{thm:converse}. The proof of Theorem \ref{thm:converse_subset} relies on the following Lemma \ref{lemma:bcf_to_id}, which is proved using Lemma \ref{lemma:gilbert_cw} (Gilbert's bound for constant weight sequences) in Appendix. Lemma \ref{lemma:bcf_to_id} states that we can find a subset $\mc F\subset\wcf(S)$ which has essentially the same size as $\wcf(S)$, but whose elements have small overlaps in their pre-images. The latter property allows us to convert a BFC code into a good identification code.

\begin{lemma}
For any integer $m\geq 1$, $1\leq S\leq 2^{m-1}$ and $\alpha\in(0,1]$, there exists a set of constant weight functions $\mc F\subset \wcf(S)$ satisfying the following properties
\begin{itemize}
\item for any $f_i, f_j\in \mc F$, it holds that $|f_\ell^{-1}[1]\cap f_j^{-1}[1]|\leq \alpha S$
\item $\log |\mc F|\geq \log {2^m\choose S}-\log {2^m\choose \lceil (1-\alpha)S\rceil-1 }-S$
\end{itemize}
\label{lemma:bcf_to_id}
\end{lemma}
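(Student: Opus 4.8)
The plan is to prove Lemma~\ref{lemma:bcf_to_id} by a greedy/Gilbert-type packing argument directly on the preimages. Identify each function $f \in \wcf(S)$ with its preimage $f^{-1}[1]$, which is an $S$-subset of the ground set $\{0,1\}^m$ of size $2^m$. So we are really asking: how large a family $\mc A$ of $S$-subsets of an $N$-element set (with $N = 2^m$) can we find such that any two members intersect in at most $\alpha S$ elements? This is exactly the setting of a constant-weight code with Hamming-ball-type constraints, and the stated second bullet has the shape of a Gilbert--Varshamov bound, so I expect the proof to invoke Lemma~\ref{lemma:gilbert_cw} (Gilbert's bound for constant weight sequences), which is referenced but not yet shown in the excerpt.

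First I would run the standard greedy construction: start with $\mc A = \emptyset$ and repeatedly add any $S$-subset $A$ that intersects every already-chosen member in at most $\alpha S$ points; stop when no such set remains. At termination, every $S$-subset of $\{0,1\}^m$ must be ``blocked'' by some $A_i \in \mc A$, meaning it intersects $A_i$ in more than $\alpha S$ points, i.e. in at least $\lceil (1-\alpha)S\rceil$ points of $A_i^c$... — more carefully, if $B$ is blocked by $A_i$ then $|B \cap A_i| \ge \lceil \alpha S\rceil + 1$, equivalently $|B \setminus A_i| \le S - \lceil \alpha S \rceil - 1 \le \lceil(1-\alpha)S\rceil - 1$. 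So the number of $S$-subsets blocked by a fixed $A_i$ is at most $\binom{S}{\ge \lceil\alpha S\rceil+1}\binom{2^m - S}{\le \lceil(1-\alpha)S\rceil-1} \le 2^S \binom{2^m}{\lceil(1-\alpha)S\rceil-1}$, using $\binom{S}{j}\le 2^S$ and monotonicity of the binomial in the relevant range (since $\lceil(1-\alpha)S\rceil - 1 \le S \le 2^{m-1}$, we are below the mode of $\binom{2^m}{\cdot}$). Since the whole collection of $\binom{2^m}{S}$ subsets is covered, $|\mc A| \cdot 2^S \binom{2^m}{\lceil(1-\alpha)S\rceil-1} \ge \binom{2^m}{S}$, which rearranges to exactly the claimed $\log|\mc F| \ge \log\binom{2^m}{S} - \log\binom{2^m}{\lceil(1-\alpha)S\rceil-1} - S$. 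Set $\mc F$ to be the functions whose preimages are the members of $\mc A$.

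The routine parts are the binomial manipulations; the one place that needs care — and where I would lean on Lemma~\ref{lemma:gilbert_cw} rather than re-derive it — is the counting of how many $S$-subsets can be ``close'' (intersection $> \alpha S$) to a fixed $S$-subset, together with the monotonicity claim $\binom{2^m}{\lceil(1-\alpha)S\rceil-1} \le \binom{2^m}{S}$-type comparisons that let one replace the exact ball-size by the clean bound $2^S\binom{2^m}{\lceil(1-\alpha)S\rceil-1}$. The hypothesis $S \le 2^{m-1}$ is exactly what guarantees all the relevant indices stay on the increasing side of Pascal's triangle, so the substitution is valid; I'd make sure that inequality is used explicitly at that step. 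The main obstacle, such as it is, is bookkeeping: getting the floor/ceiling on $\lceil(1-\alpha)S\rceil - 1$ to line up precisely with the blocking condition $|B\cap A_i| \ge \lceil \alpha S\rceil + 1$, and confirming that $(S - \lceil\alpha S\rceil - 1) \le \lceil(1-\alpha)S\rceil - 1$ holds for all integer $S$ and all $\alpha \in (0,1]$ — a short case check on fractional parts. Everything else is the textbook Gilbert argument.
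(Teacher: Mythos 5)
Your proposal is correct and is essentially the paper's argument: the paper also identifies each $f\in\wcf(S)$ with its weight-$S$ indicator vector (equivalently, the $S$-subset $f^{-1}[1]$ of $\{0,1\}^m$) and then invokes its Lemma~\ref{lemma:gilbert_cw}, a Gilbert bound for constant-weight sequences quoted from Graham--Sloane, whose proof is exactly the greedy covering/ball-counting you carry out inline, with the same ball estimate $2^S\binom{2^m}{\lceil(1-\alpha)S\rceil-1}$ justified by the same use of $S\le 2^{m-1}$. The only quibble is your blocking threshold $|B\cap A_i|\ge\lceil\alpha S\rceil+1$, which should be $\ge\lfloor\alpha S\rfloor+1$ when $\alpha S\notin\mathbb{Z}$; this is harmless since either way $|B\setminus A_i|\le\lceil(1-\alpha)S\rceil-1$, so the count and the final bound are unaffected.
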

\begin{proof}
Any function $f\in\mc F_m$ can be represented by  a  binary sequence $\ve f$ of length $2^m$, where the $i$-th position of $\ve f$ is $1$ if and only if $f(i')=1$. Here $ i'$ is the dyadic representation of $i$, i.e., the unique binary sequence of length $m$ such that $i=\sum_{j=0}^{m-1}2^j i'_{j+1}$.  In other words, $\ve f$ is defined via $\ve f_i:=f(i')$. It is clear that if $f$ has a Hamming weight $S$, the corresponding vector $\ve f$ also has Hamming weight $S$. So we can apply Lemma \ref{lemma:gilbert_cw} by identifying $T=2^m$ and $G=\mc F$. 
\end{proof}


\begin{proof}[Proof of Theorem \ref{thm:converse_subset}]
We consider a set of functions $\mc F$ with properties specified  by Lemma \ref{lemma:bcf_to_id}, and consider  an  $(n,m, \mc F,\lambda_1,\lambda_2)$ BFC code characterized by $(Q_i)_{i\in\{0,1\}^m}, (D_j)_{j\in \intset{|\mc F|}}$  for this class of functions.

We first show that we can convert the above BFC code to an identification code as follows. Specifically, the new identification code is defined to be the encoder/decoder pairs $\{(\tilde Q_j, D_j), j\in\intset{|\mc F|}\}$ where encoder $\tilde Q_j$ is defined as
\begin{align*}
\tilde Q_j(x^n):=\frac{1}{S}\sum_{i\in f_j^{-1}[1]}Q_i(x^n)
\end{align*}
Now we characterize the error probability of this constructed identification code. The mis-identification error is 
\begin{align*}
\sum_{\ve x} \tilde Q_j(\ve x)W(D_j^c|\ve x)&=\sum_{\ve x}\frac{1}{S}\sum_{i\in f_j^{-1}[1]}Q_i(\ve x)W(D_j^c|\ve x)\\
&=\frac{1}{S}\sum_{i\in f_j^{-1}[1]}\sum_{\ve x}Q_i(\ve x)W(D_j^c|\ve x)\\
&\leq \frac{1}{S}\sum_{i\in f_j^{-1}[1]} \lambda_1\\
&=\lambda_1
\end{align*}
where the inequality holds because $f_j(i)=1$ hence $\sum_{\ve x}Q_i(\ve x)W(D_j^c|\ve x)\leq\lambda_1$, by the property of the BFC code. The last equality holds  because $|f_{j}^{-1}[1]|=S$ as $f_j$ has Hamming weight $S$.

The wrong-identification error is, for $\ell\neq j$, 
\begin{align*}
\sum_{\ve x} \tilde Q_\ell(\ve x)W(D_j|\ve x)&=\sum_{\ve x}\frac{1}{S}\sum_{i\in f_\ell^{-1}[1]}Q_i(\ve x)W(D_j|\ve x)\\
&=\frac{1}{S}\left(\sum_{i\in f_\ell^{-1}[1]\cap f_j^{-1}[1]}\sum_{\ve x}Q_i(\ve x)W(D_j|\ve x)+\sum_{i\in f_\ell^{-1}[1]\backslash f_j^{-1}[1]}\sum_{\ve x}Q_i(\ve x)W(D_j|\ve x)\right)\\
&\stackrel{(a)}{\leq} \frac{1}{S}\left(|f_\ell^{-1}[1]\cap f_j^{-1}[1]|+  \sum_{i\in f_\ell^{-1}[1]\backslash f_j^{-1}[1]}\lambda_2\right)\\
&\stackrel{(b)}{\leq} \frac{|f_\ell^{-1}[1]\cap f_j^{-1}[1]|}{S}+\lambda_2\\
&\stackrel{(c)}{\leq} \alpha+\lambda_2
\end{align*}
where in (a) we use the upper bound $\sum_{\ve x}Q_i(\ve x)W(D_j|\ve x)\leq 1$ and the BFC code error property, step (b) follows $|f_\ell^{-1}[1]\backslash f_j^{-1}[1]|\leq S$, and step (c) follows from the first property stated in Lemma \ref{lemma:bcf_to_id}. 

Therefore, we have shown that  an $(n,m, \mc F, \lambda_1,\lambda_2)$ BFC codes (with properties of $\mc F$ prescribed in Lemma \ref{lemma:bcf_to_id}) implies the existence of an $(n, |\mc F|, \lambda_1, \alpha+\lambda_2)$ ID code for all $\alpha\in(0,1)$. Now fix some $\alpha\in(0,1-\lambda_1-\lambda_2)$.  Theorem \ref{thm:converse_id} states that for any $\lambda_1,\lambda_2$ such that $\lambda_1+\lambda_2<1$,  it holds that
\begin{align}
\log |\mc F|\leq 2^{n(C+\delta)}
\label{eq:converse_F}
\end{align}
for every $\delta>0$ and every sufficiently large $n$. 

Now we lower bound $|\mc F|$ in terms of $m$ using the property of $\mc F$ shown in Lemma \ref{lemma:bcf_to_id} for different cases of $S$, which will, in turn, give an upper bound on $m$.

{\bf Case 1 (constant $S$).} To apply the result in Lemma \ref{lemma:bcf_to_id}, we need $S\leq 2^{m-1}$, which is clearly satisfied for $m$ large enough. In this case we have
\begin{align}
\log |\mc F|&\geq \log {2^m\choose S}-\log {2^m\choose \lceil (1-\alpha)S\rceil-1 }-S \nonumber\\
&\stackrel{(a)}{\geq} S\log \frac{2^m}{S} - ( \lceil (1-\alpha)S\rceil-1 )\log 2^m-S \nonumber\\
&\geq  S\log \frac{2^m}{S} - (1-\alpha)S\log 2^m-S\nonumber\\
&=S(\alpha m -\log S-1) \label{eq:lowerbound_F}\\
&=m\underbrace{(\alpha S-m^{-1}S\log S- S/m)}_{A(m)} \nonumber
\end{align}
where in (a) we used $(n/k)^k\leq {n \choose k}\leq n^k$. Combining the above inequality with \eqref{eq:converse_F}, we know that the following inequality  holds
\begin{align*}
mA(m)\leq 2^{n(C+\delta)}
\end{align*}
for every $\delta>0$ and every sufficiently large $n$. Notice $A(m)$ approaches $\alpha S>0$  from the left as $m\rightarrow \infty$. So we can choose $m$ large enough such that $A(m)\geq \alpha S/2$. In this case we have 
\begin{align*}
m\leq\frac{2}{\alpha S}2^{n(C+\delta)}\leq 2^{n(C+\delta')}
\end{align*}
for all $\delta'>\delta>0$ and  sufficiently large $n$. Since the above inequality holds for all $\delta'>0$, it follows that $m\leq 2^{nC}$ hence $C$ is an upper bound on the computation capacity with the scaling function $L(R,n)=2^{Rn}$.

{\bf Case 2 ($S=c m^\beta$)}  for $\beta \geq 0$. To apply the result in Lemma \ref{lemma:bcf_to_id}, we need $cm^{\beta}\leq 2^{m-1}$, which is clearly satisfied for $m$ large enough. We reuse the lower bound \eqref{eq:lowerbound_F} derived in Case 1 for a general $S$
\begin{align*}
\log |\mc F|&\geq S(\alpha m -\log S-1)\\
&=cm^{\beta}(\alpha m - \log c - \beta\log m-1)
\end{align*}
Combining the above inequality with \eqref{eq:converse_F}, we know that the following inequality must hold
\begin{align*}
cm^{\beta}(\alpha m - \log c - \beta\log m-1)\leq 2^{n(C+\delta)}
\end{align*}
for every sufficiently large $n$. This is equivalent to
\begin{align*}
m^{1+\beta}(\alpha c-c m^{-1}\log c-c\beta m^{-1}\log m- cm^{-1})\leq 2^{n(C+\delta)}
\end{align*}
We choose $m$ large enough such that $c m^{-1}\log c+c\beta m^{-1}\log m+ cm^{-1}\leq \alpha c/2$ so the above inequality implies
\begin{align*}
m^{1+\beta}\leq \frac{2}{\alpha c} 2^{n(C+\delta)}\leq 2^{n(C+\delta')}
\end{align*}
for all $\delta'>\delta$ and sufficiently large $n$ and $m$. Since it holds  for all $\delta'>0$, we have shown that 
\begin{align*}
m\leq 2^{ \frac{C}{1+\beta}n}
\end{align*}
for sufficiently large $m$ and $n$. We showed that $C/(1+\beta)$ is an upper bound on the computation capacity with the scaling function $L(R,n)=2^{Rn}$.

\textbf{Case 3: $S=m^{\log m}$}. The condition in Lemma \ref{lemma:bcf_to_id} ($m^{\log m}=2^{(\log m)^2}\leq 2^{m-1}$) is satisfied for large $m$. Following the same step until \eqref{eq:lowerbound_F} in Case 1, we have
\begin{align*}
\log |\mc F|&\geq S(\alpha m -\log S-1)\\
&=2^{(\log m)^2}(\alpha m - (\log m)^2-1)
\end{align*}
Combining the above inequality with \eqref{eq:converse_F}, we know that the following inequality must hold
\begin{align*}
2^{(\log m)^2}(\alpha m - (\log m)^2-1)\leq 2^{n(C+\delta)}
\end{align*}
For $m$ large enough, we have
\begin{align*}
2^{(\log m)^2}\leq \frac{1}{\alpha m - (\log m)^2-1}2^{n(C+\delta)}\leq 2^{n(C+\delta')}
\end{align*}
for all $\delta'>\delta>0$ and sufficiently large $n$, which is equivalent to $m\leq 2^{\sqrt{Cn}}$. Therefore we have shown that $\sqrt{C}$ is an upper bound on the computation capacity with the scaling function $L(R,n)=2^{R\sqrt{n}}$.

\textbf{Case 4}: $S=c2^{m^{1/b}}$ for $b>1$. The assumption in Lemma \ref{lemma:bcf_to_id} ($S\leq 2^{m-1}$) is satisfied for all $b>1$ and $m$ large enough. Let $d:= \lceil (1-\alpha)S\rceil-1$, then we have the following lower bound by Lemma \ref{lemma:bcf_to_id}
\begin{align}
\log |\mc F|&\geq \log {2^m\choose S}-\log {2^m\choose d }-S \nonumber\\
&\stackrel{(a)}{\geq} S\log \frac{2^m}{S} -  d\log (2^me/d)-S\nonumber\\
&\geq  S\log \frac{2^m}{S} - (1-\alpha)S(\log e2^m-\log d)-S\nonumber\\
&\geq  S\log \frac{2^m}{S} - (1-\alpha)S(\log e2^m-\log ((1-\alpha)S-1))-S \label{eq:F_lowerbound_2}\\
&=m2^{m^{1/b}}(c\alpha - cm^{-1}\log c-cm^{1/b-1}-cm^{-1}-cm^{-1}(1-\alpha)\log e-cm^{-1}(1-\alpha)\log ((1-\alpha)S-1))\nonumber\\
&\stackrel{(b)}{=}m2^{m^{1/b}}\underbrace{\left(c\alpha+c\alpha m^{1/b-1}+cm^{-1}(1-\alpha\log (1-\alpha-S^{-1}))-cB/m \right)}_{A(m)} \nonumber
\end{align}
where in (a) we used $(n/k)^k\leq {n \choose k}\leq (en/k)^k$,  and in (b) we define $B:=\log c+(1-\alpha)\log(e/c)+1$. Combining the above inequality with \eqref{eq:converse_F}, we have
\begin{align*}
mA(m)2^{m^{1/b}}\leq 2^{n(C+\delta)}
\end{align*}
where  $A(m)\rightarrow c\alpha$ as $m\rightarrow \infty$. This implies
\begin{align*}
m2^{m^{1/b}}\leq 2^{n(C+\delta')}
\end{align*}
for all $\delta'>\delta>0$ and sufficiently large $n$ and $m$, which is equivalent to $m2^{m^{1/b}}\leq 2^{nC}$. On the other hand we have $m2^{m^{1/b}}\geq 2^{m^{1/b}}$. This shows $m$ must satisfy
\begin{align*}
m\leq (Cn)^b
\end{align*}
for sufficiently large $m, n$.

\textbf{Case 5:} $S=c2^{m/\log m}$.   To apply the result in Lemma \ref{lemma:bcf_to_id}, we need $c2^{m/\log m}\leq \frac{1}{2} 2^{m}$, which is clearly satisfied  for large enough $m$. Following the same step until \eqref{eq:F_lowerbound_2} as in Case (4), we have 
\begin{align*}
\log |\mc F|&\geq   S\log \frac{2^m}{S} - (1-\alpha)S(\log e2^m-\log ((1-\alpha)S-1))-S\\
&= S\left(\alpha(m-\log S)+(1-\alpha)\log(1-\alpha-1/S) -(1-\alpha)\log e-1 \right) \\
&\stackrel{(a)}{=}m2^{m/\log m}\underbrace{\left(\alpha c\left(1-\frac{\log c}{m}-\frac{1}{\log m}\right)+\frac{c(1-\alpha)}{m}\log (1-\alpha-1/c\cdot 2^{-m/\log m})-\frac{c(1-\alpha)\log e+c}{m}\right)}_{A(m)}
\end{align*}
where in (a) we substitute $S=c2^{m/\log m}$. Combining the above inequality with \eqref{eq:converse_F}, we have
\begin{align*}
mA(m)2^{m/\log m}\leq 2^{n(C+\delta)}
\end{align*}
where $A(m)\rightarrow \alpha c$ as $m\rightarrow \infty$. This implies
\begin{align*}
m2^{m/\log m}\leq 2^{n(C+\delta')}
\end{align*}
for $\delta'>\delta$ and for all sufficiently large $m$ and $n$. As $m2^{m/\log m}\geq 2^{m/\log m}$, we have 
\begin{align}
\frac{m}{\log m}\leq n(C+\delta').
\label{eq:m_upperbound}
\end{align}
Given the above inequality, we show by contradiction that for all $\delta''>\delta'>0$
\begin{align}
m\leq (C+\delta'')n\log n
\label{eq:m_upperbound_final}
\end{align}
for all large  $m, n$. Assume the above is not true, namely, $\forall m_0, n_0$, there exist some  $m\geq m_0, n\geq n_0$ such that $m=r(C+\delta'')n\log n$ for some $r>1$. In this case, due to \eqref{eq:m_upperbound}, we have
\begin{align*}
\frac{m}{\log m}=\frac{r (C+\delta'')n\log n}{\log r (C+\delta'')+\log n+\log\log n}\leq n(C+\delta')
\end{align*}
which is equivalent to
\begin{align*}
r\leq \frac{C+\delta'}{C+\delta''}+\frac{(C+\delta')\log r(C+\delta'')+(C+\delta')\log \log n}{(C+\delta'')\log n}
\end{align*}
As the second term in the above inequality can be made arbitrarily small for with a large enough $n$ (by choosing $n_0$ large enough),  the above inequality leads to a contradiction as the RHS can be made smaller than $1$ while the LHS $r$ is assumed to be larger than $1$. This proves the claim in \eqref{eq:m_upperbound_final}. As \eqref{eq:m_upperbound_final} holds for all $\delta''>0$, we have prove that $m\leq C n\log n$ for all sufficiently $m,n$.

\textbf{Case 6:} $S= c 2^{\gamma m}$ for $\gamma\in(0,1)$.  To apply the result in Lemma \ref{lemma:bcf_to_id}, we need $c2^{\gamma m}\leq \frac{1}{2} 2^{m}$, which can be satisfied for any $\gamma<1$ for large enough $m$. Following the same step until \eqref{eq:F_lowerbound_2} in Case (4), we have the lower bound
\begin{align*}
\log |\mc F|&\geq   S\log \frac{2^m}{S} - (1-\alpha)S(\log e2^m-\log ((1-\alpha)S-1))-S\\
&\stackrel{(a)}{=}2^{\gamma m}\left(c\alpha(1-\gamma) m+c\log\left(1-\frac{1}{(1-\alpha)c}2^{-\gamma m} \right)-cB\right)\\
&= m2^{\gamma m}\underbrace{\left(c\alpha(1-\gamma) +cm^{-1}\log\left(1-\frac{1}{(1-\alpha)c}2^{-\gamma m} \right)-cB/m\right)}_{A(m)}
\end{align*}
where in (a) we substitute $S=c2^{\gamma m}$ and define $B:=1+(1-\alpha)\log (e/(1-\alpha))+\alpha\log c$. Combining the above inequality with \eqref{eq:converse_F}, we have
\begin{align*}
mA(m)2^{\gamma m}\leq 2^{n(C+\delta)}
\end{align*}
where  $A(m)\rightarrow c\alpha(1-\gamma)$ as $m\rightarrow \infty$. This implies
\begin{align*}
m2^{\gamma m}\leq 2^{n(C+\delta')}
\end{align*}
for $\delta'>\delta$ and sufficiently large $n$ and $m$. On the other hand we have $m2^{\gamma m}\geq 2^{\gamma m}$. This shows that it holds that
\begin{align*}
m\leq \frac{(C+\delta')}{\gamma}n
\end{align*}
for sufficiently large $n$ and $m$, and for all $\delta'>\delta>0$.

\end{proof}




\appendix

\subsection{Proof of Proposition \ref{prop:maximal_code}}

\begin{proof}[Proof of Proposition \ref{prop:maximal_code}]
The proof follows that of \cite[Proposition 1]{ahlswede_identification_1989}. For any given $A_1$, we count the number of $A$ such that $|A_1\cap A|\geq \lambda M'$:
\begin{align*}
\sum_{i=\lceil \lambda M'\rceil}^{M'} \binom{|Z|-M'}{M'-i}\binom{M'}{i}
\end{align*}
For $\lambda<1/2$ and $1/\epsilon>6$, the first summand is the maximal one. Then we can upper the above expression as
\begin{align*}
M'\binom{|Z|-M'}{M'-\lceil \lambda M'\rceil}\binom{M'}{\lceil \lambda M'\rceil}\leq M' \binom{|Z|}{M'-\lceil \lambda M'\rceil} {M' \choose \lceil\lambda M'\rceil}=:T.
\end{align*}

As there are $\binom{|Z|}{M'}$ sets of cardinality of $M'$, so if $T<\binom{|Z|}{M'}$, then there must exists at least one $A_2$ such at $|A_1\cap A_2|<\lambda M'$.  Similarly, if $2T< \binom{|Z|}{M'}$, we can find another set $A_3$ with $|A_3|=M'$, $|A_3\cap A_2|<\lambda M'$ and $|A_3\cap A_1|<\lambda M'$. In general, if  it holds that $(N-1)T<\binom{|Z|}{M'}$, we can find $N$ sets with the desired property.  Hence it is easy to see that a family of sets $A_1,\ldots, A_N$ exists with the choice
\begin{align*}
N:=\left\lceil\binom{|Z|}{M'}T^{-1}\right\rceil
\end{align*}
Clearly we have
\begin{align*}
N\geq \binom{|Z|}{M'}T^{-1}
\end{align*}
As shown in \cite[Proposition 1]{ahlswede_identification_1989}, we have
\begin{align*}
\binom{|Z|}{M'}T^{-1}&=H(\lambda, M') M'^{-1}\prod_{i=1}^{\lceil \lambda M'\rceil}\frac{|Z|-M'+i}{M'-\lceil\lambda M'\rceil +i}
\end{align*}
Recall that $M':=\lceil \epsilon |Z|\rceil$. Notice that for $i= 1,\ldots, \lceil \lambda M'\rceil$, we have
\begin{align*}
\frac{|Z|-M'+i}{M'-\lceil\lambda M'\rceil +i}&\geq \frac{|Z|-M'+1}{M'-\lceil\lambda M'\rceil +\lceil \lambda M'\rceil}\\
&\geq \frac{|Z|-M'+1}{M'}\\
&\geq \frac{|Z| -\epsilon |Z|}{M'}\\
&\geq \frac{|Z|-\epsilon|Z|}{2\epsilon|Z|}\\
&=\frac{1-\epsilon}{2\epsilon}
\end{align*}
where the last inequality holds because $M'\leq \epsilon|Z|+1\leq 2\epsilon|Z|$, as it is assumed that $\epsilon |Z|\geq 1$. This implies
\begin{align*}
N\geq H(\lambda, M')M'^{-1}\left( \frac{1-\epsilon}{2\epsilon}\right)^{\lceil \lambda M'\rceil}
\end{align*}

%

\end{proof}

\subsection{Proof of Lemma \ref{lemma:error_probability}}
\label{appendix:error_probability}

We first analyze the false negative error which can happen when $f_j(i)=1$, or equivalently 
\begin{align}
i\in f_j^{-1}[1]
\label{eq:error_analysis_1}
\end{align}
In this case, we have
\begin{align*}
Q_iW(\mc D_j^c)&=\sum_{\ve x\in\mc X^n}Q_i(\ve x)W(\mc D_j^c|\ve x)\\
&\stackrel{\eqref{eq:encoder}}{=}\sum_{\ve x\in\mc X^n}\ve 1_{\ve x\in A_i}\frac{1}{|A_i|} W(\mc D_j^c|\ve x)\\
&=\frac{1}{|A_i|}\sum_{\ve u_k\in A_i}W(\mc D_j^c|\ve u_k)\\
&\stackrel{\eqref{eq:Dj_def}, \eqref{eq:error_analysis_1}}{\leq} \frac{1}{|A_i|}\sum_{\ve u_k\in A_i}W(\mc E_k^c|\ve u_k)\\
&\stackrel{\eqref{eq:good_channel_code}}{\leq}  \frac{1}{|A_i|}\sum_{\ve u_k\in A_i}\delta= \delta
\end{align*}
In other words, the false negative error is  always upper bounded by $\delta$, irrespective of $S$.

Now consider the false positive error which can happen when $f_j(i)=0$.  In this case we have
\begin{align*}
Q_iW(\mc D_j)&=\sum_{\ve x\in\mc X^n}Q_i(\ve x)W(\mc D_j|\ve x)\\
&\stackrel{\eqref{eq:encoder}}{=}\sum_{\ve x\in\mc X^n}\ve 1_{\ve x\in A_i}\frac{1}{|A_i|} W(\mc D_j|\ve x)\\
&=\frac{1}{|A_i|}\sum_{\ve u_k\in A_i}W(\mc D_j|\ve u_k)\\
&=\frac{1}{|A_i|}\left( \sum_{\ve u_k\in A_i\bigcap B_j}W(\mc D_j|\ve u_k) +\sum_{\ve u_k\in A_i\backslash B_j}W(\mc D_j|\ve u_k)\right)\\
&\leq \frac{1}{|A_i|}\left( |A_i\cap B_j| +\sum_{\ve u_k\in A_i\backslash B_j}W(\mc D_j|\ve u_k)\right)
\end{align*}
where the last inequality holds as $W(\mc D_j|\ve u_k)\leq 1$. For $\ve u_k\notin B_j$, it holds that $\mc E_k\cap \mc D_j=\emptyset$ by the construction of $\mc D_j$ in \eqref{eq:Dj_def}. This means $\mc D_j\subseteq \mc E_k^c$ hence  for $\ve u_k\in A_i\backslash B_j$ we have
\begin{align*}
W(\mc D_j|\ve u_k)\leq W(\mc E_k^c|\ve u_k)\leq \delta.
\end{align*}
This implies
\begin{align}
Q_iW(\mc D_j)&\leq \frac{|A_i\cap B_j|}{|A_i|} + \frac{\delta |A_i\backslash B_j|}{|A_i|}\nonumber	\\
&\stackrel{(a)}{\leq} \frac{\sum_{\ell \in f_j^{-1}[1]} |A_i\cap A_\ell|}{|A_i|}+\delta\nonumber\\
&\stackrel{(b)}{<}   \frac{\sum_{\ell \in f_j^{-1}[1]} \lambda M'}{M'}+\delta\nonumber\\
&\stackrel{(c)}{\leq} S \lambda +\delta\label{eq:lambda_2}
\end{align}
where $(a)$ follows because of the definition of $B_j$ in \eqref{eq:Dj_def} and an application of the union bound,  $(b)$ follows due to the property of the sets $\{A_i\}_i$ by Proposition \ref{prop:maximal_code}, that is, $|A_i \cap A_j|<\lambda M'$ for all $ i\neq j$. and $(c)$ follows that the Hamming weight of the function $f_j$ is smaller or equal to $S$, namely $|f_j^{-1}[1]|\leq S$.

\subsection{A proof of \textbf{Case 5} in Theorem \ref{thm:achievability}}

As discussed in the proof of Theorem \ref{thm:achievability}, for Case 5 ($S\leq c2^{\gamma m}$), a ``separation approach" (transmission + computation) with an existing error correction code achieves the computation rate of $C$ with the scaling function $L(R,n)=Rn$, hence another proof is not needed. For completeness, here we present a proof following the proof strategy of other cases. As one can see in the proof, the choice of parameters ($\epsilon$ and $\lambda$) make the ID code essentially the same as the original transmission code.

\begin{proof}[A proof of \textbf{Case 5} in Theorem \ref{thm:achievability}] In this case we choose $\epsilon = M^{-1}=2^{(C-\xi)n}$, $\lambda=2^{-Cn}$ and $m=\lceil (C-\xi')n\rceil$ for some $\xi'>\xi$.  This choice of $\epsilon$ satisfies \eqref{eq:epsilon_const} for large enough $n$. To proceed in the same way as for other cases, we first lower bound $\tilde N$ in \eqref{eq:m} as
\begin{align*}
\tilde N&=H(\lambda,M')M'^{-1}\left( \frac{1-\epsilon}{2\epsilon}\right)^{\lceil\lambda M'\rceil} \nonumber\\
&\geq 2^{-M'}M'^{-1}\left( \frac{1-\epsilon}{2\epsilon}\right)^{\lceil\lambda M'\rceil} 
\end{align*}
The choice of $\epsilon$ and $\lambda$ implies that $M'=1$ and $\lceil\lambda M'\rceil=1$. We can further lower bound
\begin{align*}
\tilde N\geq \frac{1}{2}\cdot \frac{1-\epsilon}{2\epsilon}
\end{align*}
Now we prove that the choice $m=\lceil (C-\xi')n\rceil$ satisfies \eqref{eq:m} by proving 
\begin{align*}
2^m\leq \frac{1}{2}\cdot \frac{1-\epsilon}{2\epsilon}
\end{align*}
Substitute $m$ in the above inequality and take logarithm on both sides, we obtain
\begin{align*}
\lceil (C-\xi')n\rceil\leq \log 1/\epsilon +\log (1-\epsilon) - \log 4
\end{align*}
which can be satisfied if  (substituting $\epsilon=2^{-(C-\xi)n}$)
\begin{align*}
(C-\xi')n+ 1\leq  (C-\xi)n +\log (1-2^{-(C-\xi)n}) - \log 4.
\end{align*}
This is equivalent to
\begin{align*}
\xi-\xi'\leq \frac{\log (1-2^{-(C-\xi)n}) - \log 4-1}{n}
\end{align*}
Notice that the RHS approaches $0$ and $n\rightarrow \infty$ where the LHS is negative as long as $\xi'>\xi$, which verifies \eqref{eq:m}.

To upper bound the error probability, we have
\begin{align*}
\lambda_2&= S\lambda+\delta\\
&= c2^{\gamma m}2^{-Cn}+\delta\\
&=c2^{\gamma \lceil (C-\xi')n\rceil-Cn}+\delta\\
&\leq c2^{\gamma (C-\xi')n+\alpha-Cn}+\delta\\
&=c2^{-(1-\gamma)Cn-\xi'n+\alpha}+\delta
\end{align*}
Since $\gamma\in(0,1]$, the error probability can be made arbitrarily small for large enough $n$ and a small enough $\delta$.

\end{proof}

\subsection{Lemma \ref{lemma:gilbert_cw} and its proof}

The following lemma is used to prove Lemma \ref{lemma:bcf_to_id}, which will be used in the proof of the converse result in Theorem \ref{thm:converse_subset}.

\begin{lemma}[Gilbert's bound for constant weight sequences]\label{lemma:gilbert_cw}
Let  $T, S$ be two natural numbers such that $1\leq S\leq T/2$. For any $\alpha\in(0,1]$,  there exists a set $G$ of $T$-length binary sequences with the following properties
\begin{itemize}
\item  for any $a,b\in G$, $|a\cap b|\leq \alpha S$
\item $\log |G|\geq \log {T\choose S}-\log {T\choose \lceil (1-\alpha)S\rceil-1 }-S$
\end{itemize}
\end{lemma}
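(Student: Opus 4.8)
The plan is to prove Lemma~\ref{lemma:gilbert_cw} by a greedy covering argument in the spirit of the classical Gilbert--Varshamov bound, but carried out over the Johnson space of constant-weight sequences rather than over the full Hamming cube. First I would fix the ambient set to be the collection $\binom{\intset{T}}{S}$ of all $T$-length binary sequences of weight exactly $S$, which has cardinality $\binom{T}{S}$, and think of each such sequence as the indicator of an $S$-subset of $\intset{T}$. The condition $|a\cap b|\le \alpha S$ that we want between any two chosen sequences is a condition on the size of the intersection of the corresponding supports; equivalently, the ``forbidden'' pairs are those with $|a\cap b|\ge \lceil(1-\alpha)S\rceil$ when measured via the complementary inequality. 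So I would set $t:=\lceil(1-\alpha)S\rceil$ and think of a sequence $b$ as being ``covered'' by an already-chosen sequence $a$ if $|a\cap b|\ge \alpha S + 1$, i.e. if $b$ shares at least $\lceil\alpha S\rceil+ \dots$ coordinates with $a$ — here one has to be mildly careful about floors/ceilings, but the point is that the bad region around each chosen point is a Johnson-ball of bounded radius.

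The core step is to bound, for a fixed weight-$S$ sequence $a$, the number of weight-$S$ sequences $b$ with $|a\cap b|$ large (i.e. in the forbidden range). Writing $b$'s support as the disjoint union of $i$ elements inside $a$'s support and $S-i$ elements outside it, the number of such $b$ with $|a\cap b|=i$ is exactly $\binom{S}{i}\binom{T-S}{S-i}$, so the total number of forbidden sequences around $a$ is $\sum_{i\ge \lceil\alpha S\rceil+1}\binom{S}{i}\binom{T-S}{S-i}$ — or, reindexed by the number $j=S-i$ of outside-coordinates, a sum over small $j$ of $\binom{S}{S-j}\binom{T-S}{j}=\binom{S}{j}\binom{T-S}{j}$. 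I would upper bound this crude sum by $\binom{T}{j}$-type factors: using $\binom{S}{j}\le \binom{T}{j}$ trivially and bounding the number of admissible $j$ values by at most $S+1$ (or absorbing it into the $-S$ slack in the statement via $S+1\le 2^S$), one gets that the forbidden set around each $a$ has size at most $\binom{T}{t-1}\cdot 2^{S}$ or a comparable expression, where $t-1=\lceil(1-\alpha)S\rceil-1$ matches the binomial coefficient appearing in the lemma. Then the standard greedy/volume argument applies: pick sequences one at a time, each new choice forbidding at most $\binom{T}{\lceil(1-\alpha)S\rceil-1}2^S$ others, so the process runs for at least $\binom{T}{S}\big/\big(\binom{T}{\lceil(1-\alpha)S\rceil-1}2^S\big)$ steps, which gives $\log|G|\ge \log\binom{T}{S}-\log\binom{T}{\lceil(1-\alpha)S\rceil-1}-S$ after taking logs.

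The main obstacle I anticipate is purely bookkeeping around the ceilings and the exact index at which the forbidden region begins: I need the threshold defining ``covered'' to be set so that (i) any two surviving sequences genuinely satisfy $|a\cap b|\le\alpha S$, and (ii) the count of covered sequences is dominated by a single binomial coefficient $\binom{T}{\lceil(1-\alpha)S\rceil-1}$ times the $2^S$ slack, rather than by a messy sum. Getting these two requirements to align forces the specific value $\lceil(1-\alpha)S\rceil-1$ in the bound, and the factor $2^S$ is exactly the cheap way to swallow the number of terms in the intersection-size sum together with the $\binom{S}{j}\le 2^S$ factors. I would also double-check the hypothesis $S\le T/2$ is used only to ensure the relevant binomial coefficients are increasing/well-defined in the ranges considered and that $T-S\ge S$, so that $\binom{T-S}{S-i}$ makes sense for the $i$'s that matter. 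Once the per-point forbidden-count bound is nailed down, the rest is the textbook greedy argument and a line of algebra with logarithms.
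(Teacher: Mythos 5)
Your proposal is correct and takes essentially the same route as the paper: the paper simply cites the Gilbert bound for constant-weight codes (Graham--Sloane) and converts the minimum-distance condition into the intersection condition via $|a\cap b|=S-d_H(a,b)/2$ with $d=\lceil(1-\alpha)S\rceil$, whereas you re-derive that bound by the identical greedy/ball-counting argument in the Johnson space, with the same forbidden-ball volume $\sum_{j\le d-1}\binom{S}{j}\binom{T-S}{j}$ and the same estimate $\le 2^S\binom{T}{d-1}$. For the bookkeeping you flag, the clean way is the paper's: bound $\binom{T-S}{j}\le\binom{T}{j}\le\binom{T}{d-1}$ (valid since $j\le d-1\le S\le T/2$) and absorb $\sum_{j}\binom{S}{j}\le 2^S$, rather than counting the number of summands.
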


\begin{proof}
By the Gilbert bound for constant weight sequences with a weight $S$, we know from  \cite[Theorem 7]{graham_lower_1980} that for any integer $0\leq d\leq S$, there exists a set $G$ of binary sequences where $d_H(a,b)\geq 2d$ for any $a,b\in G, a\neq b$ and
\begin{align*}
|G|\geq \frac{{T\choose S}}{\sum_{i=1}^{d-1}{S \choose i}{T-S \choose i}}
\end{align*}

To see the first claim, notice that two sequences $a,b$ with weight $S$ and Hamming distance $2\delta$ has an intersection $|a\cap b| = S-\delta$. To see this, we count the total weights of the two sequences in two different ways. First note that the total weights of the two sequences is $2S$, as both sequences have weight $S$. On the other hand, the total weights can also be expressed as $2|a\cap b|+2\delta$, where $2|a\cap b|$ is the number of positions where the two sequences take the value $1$, and $2\delta$ is the number of positions where only one of the two sequences takes the value $1$.  We let $d=\lceil (1-\alpha)S\rceil$.  Then it holds that $|a\cap b|=S-d_H(a,b)/2\leq  S-d\leq \alpha S$ as $d\geq (1-\alpha)S$.

To show the lower bound, notice that we have for $i=1,\ldots, d-1$
\begin{align*}
\sum_{i=1}^{d-1}{S \choose i}{T-S \choose i}&\leq \sum_{i=1}^{d-1}{S \choose i}{T \choose i}\\
&\stackrel{(a)}{\leq}  \sum_{i=1}^{d-1}{S \choose i}{T \choose d-1}\\
&\leq 2^S {T\choose d-1}
\end{align*}
where (a) holds because for $i=1,\ldots, d-1$ with $d-1\leq (1-\alpha)S\leq S\leq T/2$, we have ${T\choose i}\leq {T\choose d-1}$. 

\end{proof}

\bibliographystyle{IEEEtran}
\bibliography{BFC}

@article{ahlswede_identification_1989,
	title = {Identification via channels},
	volume = {35},
	issn = {1557-9654},
	abstract = {The authors' main finding is that any object among doubly exponentially many objects can be identified in blocklength n with arbitrarily small error probability via a discrete memoryless channel (DMC), if randomization can be used for the encoding procedure. A novel doubly exponential coding theorem is presented which determines the optimal R, that is, the identification capacity of the DMC as a function of its transmission probability matrix. This identification capacity is a well-known quantity, namely, Shannon's transmission capacity for the DMC.{\textless}{\textgreater}},
	number = {1},
	urldate = {2024-11-22},
	journal = {IEEE Transactions on Information Theory},
	author = {Ahlswede, R. and Dueck, G.},
	month = jan,
	year = {1989},
	note = {Conference Name: IEEE Transactions on Information Theory},
	keywords = {Stochastic processes, Conferences, Probability distribution, Decoding, Memoryless systems, Capacity planning, Error probability, Codes},
	pages = {15--29},
}

@article{ahlswede_general_2008,
	title = {General theory of information transfer: {Updated}},
	volume = {156},
	copyright = {https://www.elsevier.com/tdm/userlicense/1.0/},
	issn = {0166218X},
	shorttitle = {General theory of information transfer},
	abstract = {We report on ideas, problems and results, which occupied us during the past decade and which seem to extend the frontiers of information theory in several directions. The main contributions concern information transfer by channels. There are also new questions and some answers in new models of source coding. While many of our investigations are in an explorative state, there are also hard cores of mathematical theories. In particular we present a uniﬁed theory of information transfer, which naturally incorporates Shannon’s theory of information transmission and the theory of identiﬁcation in the presence of noise as extremal cases. It provides several novel coding theorems. On the source coding side we introduce data compression for identiﬁcation. Finally we are led beyond information theory to new concepts of solutions for probabilistic algorithms.},
	language = {en},
	number = {9},
	urldate = {2024-11-18},
	journal = {Discrete Applied Mathematics},
	author = {Ahlswede, Rudolf},
	month = may,
	year = {2008},
	pages = {1348--1388},
	file = {PDF:/Users/jingge/Zotero/storage/8Q9J3XQ5/Ahlswede - 2008 - General theory of information transfer Updated.pdf:application/pdf},
}

@article{ahlswede_strong_2002,
	title = {Strong converse for identification via quantum channels},
	volume = {48},
	issn = {1557-9654},
	abstract = {We present a simple proof of the strong converse for identification via discrete memoryless quantum channels, based on a novel covering lemma. The new method is a generalization to quantum communication channels of Ahlswede's (1979, 1992) approach to classical channels. It involves a development of explicit large deviation estimates to the case of random variables taking values in self-adjoint operators on a Hilbert space. This theory is presented separately in an appendix, and we illustrate it by showing its application to quantum generalizations of classical hypergraph covering problems.},
	number = {3},
	urldate = {2025-08-04},
	journal = {IEEE Transactions on Information Theory},
	author = {Ahlswede, R. and Winter, A.},
	month = mar,
	year = {2002},
	keywords = {Communication channels},
	pages = {569--579},
	file = {Full Text PDF:/Users/jingge/Zotero/storage/VZYQA9VZ/Ahlswede and Winter - 2002 - Strong converse for identification via quantum channels.pdf:application/pdf},
}

@article{graham_lower_1980,
	title = {Lower bounds for constant weight codes},
	volume = {26},
	issn = {1557-9654},
	abstract = {LetA(n,2{\textbackslash}delta,w)denote the maximum number of codewords in any binary code of lengthn, constant weightw, and Hamming distance2{\textbackslash}deltaSeveral lower bounds forA(n,2{\textbackslash}delta,w)are given. Forwand{\textbackslash}deltafixed,A(n,2{\textbackslash}delta,w) {\textbackslash}geq n{\textasciicircum}W-{\textbackslash}delta+l/w!andA(n,4,w){\textbackslash}sim n{\textasciicircum}w-l/w!asn {\textbackslash}rightarrow ınfty. In most cases these are better than the "Gilbert bound." Revised tables ofA(n,2 {\textbackslash}delta,w)are given in the rangen łeq 24and{\textbackslash}delta łeq 5.},
	number = {1},
	urldate = {2025-08-06},
	journal = {IEEE Transactions on Information Theory},
	author = {Graham, R. and Sloane, N.},
	month = jan,
	year = {1980},
	keywords = {Additives, Upper bound, Entropy, Codes, Vectors, Hamming distances, Ions, Lower bound, Polynomials, Uncertainty},
	pages = {37--43},
	file = {Full Text PDF:/Users/jingge/Zotero/storage/SC5IV3Z4/Graham and Sloane - 1980 - Lower bounds for constant weight codes.pdf:application/pdf},
}

@article{von_lengerke_codes_2025,
	title = {Codes for {Identification} via {Channels}: {Tutorial} for {Communications} {Generalists}},
	issn = {1553-877X},
	shorttitle = {Codes for {Identification} via {Channels}},
	abstract = {Identification via channels (ID) is a communication problem in which a receiver tries to discern whether a received message matches a previously selected message. With identification, the receiver tries to answer the yes-no question “Is this my message?"; whereas, with message transmission, the receiver tries to answer the question “What message is this?", which has many possible answers. Information and coding theory show that for identification, specialized codes can achieve efficiency gains of an exponential order compared to full transmission of messages. This tutorial gives an introduction to identification via channels and to codes for identification via channels, for communication generalists with an interest in this traction-gaining topic. Specifically, a receiver can identify a message reliably from a received noisy-ID codeword. A noisy-ID codeword can be constructed by concatenating a linear block code that mitigates channel distortion with a noiseless-ID codeword that encodes a message efficiently for reliable identification. Noiseless-ID codes can be implemented as tagging codes or constant-weight codes that both guarantee low collision probabilities due to the distance properties of their underlying linear codes, such as Reed-Solomon codes, Reed-Muller codes, or random linear codes. We revisit and explain the close relationship between noiseless-ID codes and universal hash functions. ID is a ubiquitous communication problem and is relevant to all scenarios where two parties aim to determine whether two pieces of data are exactly identical. Specific non-cryptographic use cases include determining data integrity and state consistency in digital twins. While existing comprehensive tutorial treatments of ID are targeted at an information theory readership, we provide the first comprehensive ID tutorial targeted at a generalist readership without an information theory background by explaining the abstract ID concepts from basic principles with extensive illustrations and examples.},
	urldate = {2025-07-07},
	journal = {IEEE Communications Surveys \& Tutorials},
	author = {von Lengerke, Caspar and Cabrera, Juan A. and Reisslein, Martin and Fitzek, Frank H.P.},
	year = {2025},
	keywords = {Noise measurement, Channel estimation, Linear codes, Receivers, Error probability, Reliability, Codes, Tutorials, Symbols, Coding theory, Goal-oriented communication, Identification via channels, Post-Shannon, Semantic communication, Surveys, Tagging, Universal hashing},
	pages = {1--1},
	file = {Full Text PDF:/Users/jingge/Zotero/storage/6DHUVMMX/von Lengerke et al. - 2025 - Codes for Identification via Channels Tutorial for Communications Generalists.pdf:application/pdf},
}

@article{colomer_deterministic_2025,
	title = {Deterministic {Identification} {Over} {Channels} {With} {Finite} {Output}: {A} {Dimensional} {Perspective} on {Superlinear} {Rates}},
	volume = {71},
	issn = {1557-9654},
	shorttitle = {Deterministic {Identification} {Over} {Channels} {With} {Finite} {Output}},
	abstract = {Following initial work by JaJa, Ahlswede and Cai, and inspired by a recent renewed surge in interest in deterministic identification (DI) via noisy channels, we consider the problem in its generality for memoryless channels with finite output, but arbitrary input alphabets. Such a channel is essentially given by its output distributions as a subset in the probability simplex. Our main findings are that the maximum length of messages thus identifiable scales superlinearly as R{\textbackslash},nłog n with the block length n, and that the optimal rate R is bounded in terms of the covering (aka Minkowski, or Kolmogorov, or entropy) dimension d of a certain algebraic transformation of the output set: {\textbackslash}frac 14 d łeq R łeq {\textbackslash}frac 12 d . Remarkably, both the lower and upper Minkowski dimensions play a role in this result. Along the way, we present a Hypothesis Testing Lemma showing that it is sufficient to ensure pairwise reliable distinguishability of the output distributions to construct a DI code. Although we do not know the exact capacity formula, we can conclude that the DI capacity exhibits superactivation: there exist channels whose capacities individually are zero, but whose product has positive capacity. We also generalise these results to classical-quantum channels with finite-dimensional output quantum system, in particular to quantum channels on finite-dimensional quantum systems under the constraint that the identification code can only use tensor product inputs.},
	number = {5},
	urldate = {2025-07-01},
	journal = {IEEE Transactions on Information Theory},
	author = {Colomer, Pau and Deppe, Christian and Boche, Holger and Winter, Andreas},
	month = may,
	year = {2025},
	keywords = {Noise measurement, Probability distribution, Entropy, Channel coding, Memoryless systems, Testing, Error probability, Shannon theory, Codes, 6G mobile communication, identification via channels, Quantum channels, quantum information},
	pages = {3373--3396},
	file = {Full Text PDF:/Users/jingge/Zotero/storage/57C7VNEC/Colomer et al. - 2025 - Deterministic Identification Over Channels With Finite Output A Dimensional Perspective on Superlin.pdf:application/pdf},
}

@article{salariseddigh_deterministic_2022,
	title = {Deterministic {Identification} {Over} {Channels} {With} {Power} {Constraints}},
	volume = {68},
	issn = {1557-9654},
	abstract = {The deterministic identification (DI) capacity is developed in multiple settings of channels with power constraints. A full characterization is established for the DI capacity of the discrete memoryless channel (DMC) with and without input constraints. Originally, Ahlswede and Dueck established the identification capacity with local randomness at the encoder, resulting in a double exponential number of messages in the block length n . In the deterministic setup, the number of messages scales exponentially, as in Shannon’s transmission paradigm, but the achievable identification rates are higher. An explicit proof was not provided for the deterministic setting. In this paper, a detailed proof is presented for the DMC. Furthermore, Gaussian channels with fast and slow fading are considered, when channel side information is available at the decoder. A new phenomenon is observed as we establish that the number of messages scales as 2$^{\textrm{nłog (n)R}}$ by deriving lower and upper bounds on the DI capacity on this scale. Consequently, the DI capacity of the Gaussian channel is infinite in the exponential scale and zero in the double exponential scale, regardless of the channel noise.},
	number = {1},
	urldate = {2025-07-01},
	journal = {IEEE Transactions on Information Theory},
	author = {Salariseddigh, Mohammad J. and Pereg, Uzi and Boche, Holger and Deppe, Christian},
	month = jan,
	year = {2022},
	keywords = {Decoding, Channel capacity, Memoryless systems, Receivers, Monte Carlo methods, Gaussian channels, Codes, Fading channels, channel side information, deterministic codes, Gaussian fading channels, identification, super exponential growth},
	pages = {1--24},
	file = {Full Text PDF:/Users/jingge/Zotero/storage/PW52P2VJ/Salariseddigh et al. - 2022 - Deterministic Identification Over Channels With Power Constraints.pdf:application/pdf},
}

@article{kurosawa_strongly_1999,
	title = {Strongly universal hashing and identification codes via channels},
	volume = {45},
	issn = {1557-9654},
	abstract = {This article shows that /spl epsiv/-almost strongly universal classes of hash functions can yield better explicit constructions of identification codes via channels (ID codes) and identification plus transmission codes (IT codes) than the previous explicit constructions of Verdu and Wei (1993).},
	number = {6},
	urldate = {2025-09-01},
	journal = {IEEE Transactions on Information Theory},
	author = {Kurosawa, K. and Yoshida, T.},
	month = sep,
	year = {1999},
	keywords = {Cryptography},
	pages = {2091--2095},
	file = {Full Text PDF:/Users/jingge/Zotero/storage/JS573T6W/Kurosawa and Yoshida - 1999 - Strongly universal hashing and identification codes via channels.pdf:application/pdf},
}

@article{verdu_explicit_1993,
	title = {Explicit construction of optimal constant-weight codes for identification via channels},
	volume = {39},
	issn = {1557-9654},
	abstract = {The identification coding theorems of R. Ahlswede and G. Dueck (1981) have shown that for any nonzero probabilities of missed and false identification, it is possible to transmit exp(exp(nR)) messages with n uses of a noisy channel, where R is as close as desired to the Shannon capacity of the channel. That capability is achieved by the identification codes explicitly constructed with a three-layer concatenated constant-weight code used in conjunction with a channel transmission code of rate R.{\textless}{\textgreater}},
	number = {1},
	urldate = {2025-09-01},
	journal = {IEEE Transactions on Information Theory},
	author = {Verdu, S. and Wei, V.K.},
	month = jan,
	year = {1993},
	keywords = {Capacity planning, Channel capacity, Channel coding, Concatenated codes, Decoding, Error probability, Information theory, Memoryless systems},
	pages = {30--36},
}

@article{ahlswede_identification_1991,
	title = {On identification via multiway channels with feedback},
	volume = {37},
	issn = {1557-9654},
	abstract = {For the case of complete feedback, a fairly unified theory of identification is presented. Its guiding principle is the discovery that communicators (sender and receiver) must set up a common random experiment with maximal entropy and use it as randomization for a suitable identification technique. It is shown how this can be done in a constructive manner. The proof of optimality (weak converse) is based on a novel entropy bound, which can be viewed as a substitute for Fano's lemma in the present context. The single-letter characterization of (second-order) capacity regions now rests on an entropy characterization problem, which often can be solved. This is done for the multiple-access channel with deterministic encoding strategies, and for the broadcast channel with randomized encoding strategies.{\textless}{\textgreater}},
	number = {6},
	urldate = {2025-09-01},
	journal = {IEEE Transactions on Information Theory},
	author = {Ahlswede, R. and Verboven, B.},
	month = nov,
	year = {1991},
	keywords = {Broadcasting, Conferences, Cryptography, Entropy, Feedback, Information theory},
	pages = {1519--1526},
	file = {Full Text PDF:/Users/jingge/Zotero/storage/QBV4Z7IL/Ahlswede and Verboven - 1991 - On identification via multiway channels with feedback.pdf:application/pdf},
}

@article{bracher_identification_2017,
	title = {Identification via the {Broadcast} {Channel}},
	volume = {63},
	issn = {1557-9654},
	abstract = {The identification (ID) capacity region of the two-receiver broadcast channel (BC) is shown to be the set of rate-pairs for which, for some distribution on the channel input, each receiver's ID rate does not exceed the mutual information between the channel input and the channel output that it observes. Moreover, the capacity region's interior is achieved by codes with deterministic encoders. The results are obtained under the average-error criterion, which requires that each receiver reliably identify its message whenever the message intended for the other receiver is drawn at random. They hold also for channels whose transmission capacity region is to-date unknown. Key to the proof is a new ID code construction for the single-user channel. An extension to the three-receiver BC is also discussed: an inner bound on the ID capacity region is obtained, and that is shown to be in some cases tight.},
	number = {6},
	urldate = {2025-06-10},
	journal = {IEEE Transactions on Information Theory},
	author = {Bracher, Annina and Lapidoth, Amos},
	month = jun,
	year = {2017},
	keywords = {Electronic mail, Channel capacity, Receivers, Encoding, Mutual information, Monte Carlo methods, Reliability, broadcast channel, identification via channels},
	pages = {3480--3501},
	file = {Full Text PDF:/Users/jingge/Zotero/storage/HW8HRC62/Bracher and Lapidoth - 2017 - Identification via the Broadcast Channel.pdf:application/pdf},
}

@article{gunlu_code_2022,
	title = {Code {Constructions} and {Bounds} for {Identification} via {Channels}},
	volume = {70},
	issn = {1558-0857},
	abstract = {Consider the identification (ID) via channels problem, where a receiver decides whether the transmitted identifier is its identifier, rather than decoding it. This model allows to transmit identifiers whose size scales doubly-exponentially in the blocklength, unlike common transmission codes with exponential scaling. Binary constant-weight codes (CWCs) suffice to achieve the ID capacity. Relating parameters of a binary CWC to the minimum distance of a code and using higher-order correlation moments, two upper bounds on binary CWC sizes are proposed. These bounds are also upper bounds on identifier sizes for ID codes constructed by using binary CWCs. We propose two constructions based on optical orthogonal codes (OOCs), which are used in optical multiple access schemes, have constant-weight codewords, and satisfy cyclic cross-correlation and auto-correlation constraints. These constructions are modified and concatenated with outer Reed-Solomon codes to propose new binary CWCs being optimal for ID. Improvements to the finite-parameter performance are shown by using outer codes with larger minimum distance vs. blocklength ratios. We illustrate ID regimes for which our ID code constructions perform significantly better than existing constructions.},
	number = {3},
	urldate = {2025-09-01},
	journal = {IEEE Transactions on Communications},
	author = {Günlü, Onur and Kliewer, Jörg and Schaefer, Rafael F. and Sidorenko, Vladimir},
	month = mar,
	year = {2022},
	keywords = {binary constant weight codes, Codes, constant composition codes, Correlation, Decoding, hypothesis testing, Identification via channels, Noise measurement, optical orthogonal codes, Optical transmitters, Reliability, Upper bound},
	pages = {1486--1496},
	file = {Full Text PDF:/Users/jingge/Zotero/storage/WYSNT4LR/Günlü et al. - 2022 - Code Constructions and Bounds for Identification via Channels.pdf:application/pdf},
}

@book{enderton_mathematical_1972,
	title = {A {Mathematical} {Introduction} to {Logic}},
	isbn = {978-0-12-238450-9},
	language = {en},
	publisher = {Academic Press},
	author = {Enderton, Herbert B.},
	year = {1972},
	keywords = {Mathematics / General, Mathematics / Logic},
}

@article{identification_feedback_1989,
	title = {Identification in the presence of feedback-a discovery of new capacity formulas},
	volume = {35},
	issn = {1557-9654},
	number = {1},
	urldate = {2025-06-06},
	journal = {IEEE Transactions on Information Theory},
	author = {Ahlswede, R. and Dueck, G.},
	month = jan,
	year = {1989},
	keywords = {Entropy, Random variables, Decoding, Memoryless systems, Capacity planning, Mutual information, Error probability, Feedback, Assembly, Length measurement},
	pages = {30--36},
}


\end{document}